\DeclareMathAlphabet{\pazocal}{OMS}{zplm}{m}{n}
\newtheorem{theorem}{Theorem}[section]
\newtheorem{lemma}[theorem]{Lemma}
\newtheorem{proposition}[theorem]{Proposition}
\numberwithin{equation}{section}
\newcommand{\CC}{{\mathbb C}}
\newcommand{\RR}{{\mathbb R}}
\newcommand{\NN}{{\mathbb N}}
\newcommand{\ZZ}{{\mathbb Z}}
\newcommand{\Cc}{{\mathcal{C}}}
\newcommand{\Dc}{{\mathcal{D}}}
\newcommand{\bip}{{\bm p}}
\newcommand{\fV}{{\mathfrak V}}
\newcommand{\Times}{\mbox{\Large $\times$}}      
\newcommand\meddot{\scalebox{0.4}{\textbullet}}  
\newcommand{\DOT}[1]{\overset{\meddot}{#1}}      
\newcommand{\Dom}{\pazocal{D}}                   
\newcommand{\Hil}{\pazocal{H}}                   
\newcommand{\Surf}{\pazocal{S}}                  
\newcommand{\Loop}{\pazocal{L}}                  
\newcommand{\Region}{\pazocal{O}}                
\newcommand{\Poin}{\pazocal{P}}
\def\eg{{\it e.g.\ }}
\def\ie{{\it i.e.\ }}
\def\viz{{\it viz.\ }}
\begin{document} 

%

\title{
The universal C*-algebra of the electromagnetic field \  
II. Topological charges and spacelike linear fields\\[3mm]
{\large \it Dedicated to Karl-Henning Rehren on the occasion of his 60th 
birthday} \\[4mm]}
\author{Detlev Buchholz${}^{(1)}$,
\ Fabio Ciolli${}^{(2)}$, \ Giuseppe Ruzzi${}^{(2)}$ \
and \ Ezio Vasselli${}^{(2)}$ \\[20pt]
\small 
${}^{(1)}$ Institut f\"ur Theoretische Physik, Universit\"at G\"ottingen, \\
\small Friedrich-Hund-Platz 1, 37077 G\"ottingen, Germany\\[5pt]
\small
${}^{(2)}$ 
Dipartimento di Matematica, Universit\'a di Roma ``Tor Vergata'' \\
\small Via della Ricerca Scientifica 1, 00133 Roma, Italy \\
}
\date{}

\maketitle

{\small 
\noindent {\bf Abstract.} 
Conditions for the appearance of topological charges
are studied in the framework of the 
universal C*-algebra of the electromagnetic field,
which is represented in any theory describing electromagnetism. It 
is shown that non-trivial topological charges, 
described by pairs of 
fields localised in certain topologically  
non-trivial spacelike separated regions, can appear in regular 
representations of the algebra only if the fields depend non-linearly 
on the mollifying test functions. On the other hand, 
examples of regular vacuum representations 
with non-trivial topological charges are 
constructed, where the underlying field still 
satisfies a weakened form of ``spacelike linearity''.  
Such representations also appear in the presence of 
electric currents. The status of topological charges 
in theories with several types of electromagnetic fields, 
which appear in the short distance (scaling) 
limit of asymptotically free non-abelian 
gauge theories, is also briefly discussed.
 \\[1mm]
{\bf Mathematics Subject Classification.}  \ 81V10, 81T05, 14F40  
 \\[1mm]
{\bf Keywords.} \ electromagnetic field, topological charges, non-linear
field operators
}

\section{Introduction}

A universal C*-algebra 
for the description of the electromagnetic field has 
recently been constructed in \cite{BuCiRuVa}.
It was argued there that representations of this
algebra appear in any theory describing electromagnetism. 
The algebra does not contain any specific dynamical information, yet  
such information can be obtained from it by (i) choosing in  
its dual space some suitable pure state, (ii) proceeding to its GNS 
representation and (iii) considering the quotient of the algebra with 
regard to the kernel of this representation. In this way one obtains 
all theories of the electromagnetic field which comply with the Haag-Kastler
axioms \cite{HaKa}. 

It was observed in this analysis that there exist representations of
the universal algebra with non-trivial topological charges. These charges 
are given by the commutator of operators, describing the 
intrinsic (gauge invariant) vector potential, which have  
their supports in 
certain spacelike separated, topologically non-trivial regions
(precise definitions are given below). As was shown in \cite{BuCiRuVa}, 
such commutators need not vanish, but they are elements of the centre 
of the algebra. The arguments given in \cite{BuCiRuVa} for the 
existence of states where these commutators have values different 
from zero, indicating non-trivial topological charges, led   
only to an abstract existence theorem, however. In particular, 
the conditions for the existence of these charges 
remained unclear. It is the aim 
of the present investigation to clarify this point. 

We will show that in all \textit{regular} pure states on the 
algebra, where one can define the vector potential, the topological charges 
vanish whenever this potential depends linearly on the underlying test 
functions. In particular, such charges cannot appear in the Wightman
framework. Yet the condition of unrestrained linearity of quantum fields on 
test functions does not have a 
clear-cut operational basis and seems more a matter of convenience.
In particular, the Haag-Kastler framework of quantum field theory
does not rely on such a condition and non-linear quantum fields  
already appeared in other contexts, cf. for example 
\cite[Eq.\ 4]{BuMaPaTo} and \cite[Rem.\ 6.1]{BrDuFr}. 

Within the present framework of the universal algebra, 
we will exhibit regular vacuum states with non-trivial 
topological charges.
There the resulting vector potentials are homogeneous on the space of 
test functions, but they have the property of additivity 
only for test functions with spacelike separated supports 
(\ie the potentials are \textit{spacelike linear}).
Our first example is based on a reinterpretation of the theory of the
free electromagnetic field. We then show that such states 
also exist in the presence of non-trivial electric currents. 

We also consider theories of several 
electromagnetic fields which are described by suitable extensions 
of the universal algebra. One may expect that these
theories cover the scaling (short distance) limit of non-abelian 
gauge fields in view of their expected property of asymptotic freedom. 
It is of interest that in these theories there exist 
regular vacuum states with non-trivial topological charges 
where the corresponding vector 
potentials depend linearly on test functions. The topological charges
are given there by the commutator of spacelike separated
operators describing different potentials. It is an intriguing question 
whether such charges might manifest themselves already at finite 
scales in gauge theory. 

In the subsequent section we recall from \cite{BuCiRuVa} 
some basic properties of the universal algebra. 
Sec.~3 contains the proof that regular pure states carrying 
a non-trivial topological charge give rise to non-linear 
vector potentials. In Sec.~4 we exhibit a vacuum state 
with vanishing electric current which carries a topological charge. 
That such vacuum states also exist for non-trivial electric 
currents is shown in Sec.~5. Examples of states carrying  
a topological charge in theories with several fields are presented 
in Sec.~6. The article closes with some brief conclusions.

\section{Preliminaries}

Conventionally, the electromagnetic field $F$ is described 
by an operator valued map $f \mapsto F(f)$, where 
$f \in \Dc_2(\RR^4)$ are compactly supported real test functions
with values in the antisymmetric tensors of rank two 
\cite{Steinmann, Strocchi2}.
As is well known, the homogeneous Maxwell equations imply that 
this field can conveniently be described by an intrinsic 
(gauge invariant) vector potential $g \mapsto A(g)$, where 
$g \in \Cc_1(\RR^4)$, the space of real vector valued 
test functions that are co-closed, \ie satisfy $\delta g = 0$. 
Here $\delta$ denotes the co-derivative (generalised divergence) 
which is related to the exterior derivative $d$ (generalised curl)
by $\delta = - \star d \star$ and $\star$ is the Hodge operator. 
Recalling that $\delta^2 = 0$, 
the electromagnetic field and the 
potential are related  by $F(f) \doteq A(\delta f)$,
$f \in \Dc_2(\RR^4)$, and the electric current is defined by 
$j(h) \doteq A(\delta d h)$ for real vector valued test 
functions $h \in \Dc_1(\RR^4)$. 

In \cite{BuCiRuVa} the properties of the intrinsic vector potential 
were cast into a C*-algebraic setting by formally proceeding 
from $A(g)$ to unitary operators $V(a,g) \, \hat{=} \, e^{\, iaA(g)}$.
More precisely, one proceeds from the free *-algebra, generated
by the symbols $V(a,g)$, where $a \in \RR$, $g \in \Cc_1(\RR^4)$,  
and takes its quotient with regard to the ideal generated by the relations
\begin{eqnarray}
& \label{a1} V(a_1,g) V(a_2,g) = V(a_1+a_2,g) \, , 
\ \ V(a,g)^* = V(-a,g) \, , \ \ V(0,g) = 1 & \\
& \label{a2} 
V(a_1 , g_1)V(a_2 , g_2) = 
V(1, a_1 g_1 + a_2 g_2) \, \ \
\mbox{if} \ \ \mbox{supp} \, g_1 \, \Times \, \mbox{supp} \, g_2 &  
\\
& \label{a3} 
\lfloor V(a,g) \, , \lfloor V(a_1,g_1), V(a_2,g_2) \rfloor \rfloor = 1  \, \ \
\mbox{for any $g$ if}
\ \ \mbox{supp} \, g_1 \perp \mbox{supp} \, g_2 \, . &
\end{eqnarray}
Relation \eqref{a1}
subsumes the algebraic properties of unitary one-parameter groups 
$a \mapsto V(a,g)$, expressing the idea that one is dealing with the
exponential functions of the potential, mollified with test functions 
$g \in \Cc_1(\RR^4)$. Relation~\eqref{a2} encodes restricted 
linearity and locality properties of the vector potential, where the symbol 
$\, \Times$ marks pairs of regions that can be separated by two opposite 
characteristic wedges 
(\eg spacelike separated double cones).
The symbol $\lfloor \cdot , \cdot \rfloor$ 
in relation~\eqref{a3} indicates the group 
theoretic commutator of unitary operators;
this relation embodies the information that 
the commutator of operators which are localised in arbitrary 
spacelike separated regions, marked by the symbol  
$\perp$, is a central element. These operators therefore
determine in general superselected quantities which, in view of their 
topological nature, are called topological charges.
Note that these conditions are slightly weaker 
than the corresponding ones in \cite{BuCiRuVa}. 
They simplify the discussion of the topological charges which 
are of interest here. 

As has been shown in  \cite{BuCiRuVa}, the *-algebra $\fV_0$ generated by 
the unitaries $V(a,g)$ with $g \in \Cc_1(\RR^4)$, $a \in \RR$, can be 
equipped with a C*-norm generated by all of its 
GNS representations. Proceeding to the completion of $\fV_0$ with regard
to this norm, one obtains the universal C*-algebra $\fV$ of the 
electromagnetic field. 

As already mentioned, the algebra $\fV$ does not 
contain any dynamical information. But since the Poincar\'e 
transformations $P \in \Poin_+^\uparrow$ act on $\fV$ by 
automorphisms~$\alpha_P$ which are defined on the generating unitaries
by $\alpha_P(V(a,g)) = V(a,g_P)$, where $g_P$ is the Poincar\'e transformed
test function $g$, one can identify 
the vacuum states in the dual space of $\fV$. Picking any 
such (pure) vacuum state $\omega$ 
and proceeding to its GNS representation $(\pi, \Hil, \Omega)$,
one obtains a faithful representation of the quotient algebra 
$\fV / \mbox{ker} \pi$, where the Poincar\'e transformations are 
unitarily implemented. In this way one can in principle describe 
any dynamics of the electromagnetic field in a manner 
which is compatible with the Haag-Kastler axioms \cite{BuCiRuVa}. 

If one wants to recover in a representation 
from the unitaries $V(a,g)$ the underlying vector potential,  
one has to restrict attention to states $\omega$ that
are (strongly) regular. This means that the functions 
$a_1, \dots , a_n \mapsto \omega(V(a_1, g_1) \cdots V(a_n, g_n))$
are smooth for arbitrary test functions $g_1, \dots , g_n \in 
\Cc_1(\RR^4)$ and $n \in \NN$. One then finds \cite{BuCiRuVa} that in 
the corresponding GNS representation $(\pi, \Hil, \Omega)$
one has $\pi(V(a,g)) = e^{ia A_\pi(g)}$ for $a \in \RR$, $g \in \Cc_1(\RR^4)$.
Here $A_\pi(g)$ are selfadjoint operators with a common core 
$\Dom \subset \Hil$ that is stable under their action
and includes the vector $\Omega$. Moreover, as a consequence of relation 
\eqref{a2}, these operators are \textit{spacelike linear} 
in the sense that they satisfy on $\Dom$ the equality 
$a_1 \, A_\pi(g_1) + a_2 \, A_\pi(g_2) = A_\pi(a_1 g_1 + a_2 g_2)$
whenever $\mbox{supp} \, g_1 \Times \mbox{supp} \, g_2$. 

If the potential resulting
from some regular pure state is also linear in the usual sense, then 
this state carries no topological charges, as is shown in 
the subsequent section. More precisely, one
then has $[A_\pi(g_1), A_\pi(g_2)] = 0$ whenever $g_1, g_2$ have 
spacelike separated supports in linked, loop-shaped regions. Yet, as 
we shall see, there exist also regular vacuum states where these 
commutators have non-trivial values and the underlying fields 
are still spacelike linear. 

\section{Symplectic forms and absence of topological charges
\label{no-go}}

In order to clarify the conditions for the existence of non-trivial
topological charges, we consider in this section regular 
pure states $\omega$ on the universal algebra $\fV$ with corresponding
GNS representation $(\pi, \Hil, \Omega)$; note that we do not require
here that these states describe the vacuum. In complete analogy to the
preceding discussion, the regularity of a state $\omega$ implies that 
for any $g \in \Cc_1(\RR^4)$ there exists a selfadjoint operator $A_\pi(g)$ 
in the corresponding representation. These 
operators  are the generators of the 
unitary groups $a \mapsto \pi(V(a,g))$ and have a common 
stable core $\Dom$ that includes the vector $\Omega$. It then follows
from relation \eqref{a3} that whenever 
$\mbox{supp} \, g_1 \perp \mbox{supp} \, g_2$, the commutator
$[A_\pi(g_1), A_\pi(g_2)]$ is affiliated with the centre of 
the weak closure $\pi(\fV)^-$ of the represented algebra. Since the
underlying state was assumed to be pure, the elements of the centre, 
hence also these commutators, are  multiples of the identity. 
So their expectation values do not depend on the chosen state 
within the representation and this brings us to define the real form 
\begin{equation} \label{symplectic}
\sigma_\pi(g_1, g_2) \doteq i \, \langle \Omega, [A_\pi(g_1), A_\pi(g_2)] \, 
\Omega \rangle \quad \text{for any} \quad g_1, g_2 \in \Cc_1(\RR^4) \, .
\end{equation}
Now according to relation \eqref{a2}, the operators $A_\pi$ are in general  
only spacelike linear. But, depending on the choice of state, they may 
also be (real) linear on $\Cc_1(\RR^4)$. In the latter case the 
form $\sigma_\pi$
given above defines a bilinear, skew symmetric (symplectic) form
on $\Cc_1(\RR^4)$. 

We shall show that such symplectic forms 
$\sigma_\pi$ vanish for any pair of test functions 
{$g_1, g_2 \in \Cc_1(\RR^4)$} having 
their supports in certain spacelike separated, linked loop-shaped regions. 
Whence the corresponding topological charges vanish. An open, bounded  
region $\Loop \subset \RR^4$ is said to be 
loop-shaped if there exists in its interior some spacelike (hence simple)
loop 
$[0,1] \ni t \mapsto \gamma(t)$, consisting of points which are spacelike
separated from each other, to which it can continuously be retracted.
Thus, $\gamma$ is a deformation retract of~$\Loop$ and therefore
homotopy equivalent (homotopic) to~$\Loop$.  
Simple examples of linked loop-shaped regions are 
depicted in figure 1. 

\begin{figure}[h] 
\centering 
\epsfig{file=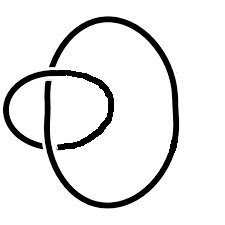,height=30mm}
\epsfig{file=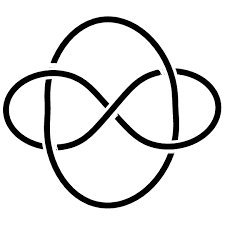,height=30mm}
\caption{Linked loop-shaped regions: Hopf link and Whitehead link} 
\label{fig1} 
\end{figure}

For the proof of these statements, let $\Loop \subset \RR^4$
be loop-shaped with corresponding loop $\gamma$ and let 
$\Region_0 \subset \RR^4$ be a sufficiently small neighbourhood 
of the origin such that 
\mbox{$(\Region_0 + \gamma) \subset \Loop$}. Then, for any real scalar 
test function $s \in \Dc_0(\RR^4)$ with support in $\Region_0$,  
we define a corresponding loop function 
$x \mapsto l_{s, \gamma}(x) \doteq \int_0^1 dt \ 
s(x - \gamma(t)) \, \DOT{\gamma}(t)$, where $\DOT{\gamma}$
denotes the derivative of $\gamma$. One easily checks that 
$l_{s, \gamma} \in \Cc_1(\RR^4)$ and that it has support in 
$(\Region_0 + \gamma) \subset \Loop$. Moreover, if $\int \! dx \, s(x) \neq 0$
there is no $f \in \Dc_2(\RR^4)$ 
with support in $\Loop$ such that $l_{s, \gamma} = \delta f$,
\ie \ $l_{s, \gamma}$ is co-closed but not co-exact in this region. 
As has been pointed out 
in \cite{BuCiRuVa}, this fact is crucial for the appearance 
of non-trivial topological charges. We will restrict attention 
here to charges of this particular type. 

The following lemma shows that for given loop-shaped region 
$\Loop$ and any function $g \in \Cc_1(\RR^4)$, having
support in $\Loop$, one finds within its co-cohomology class
(referring to the co-derivative) loop functions $l_{s, \gamma}$, as 
given above. 

\begin{lemma} \label{l3.1}
Let $g \in \Cc_1(\RR^4)$ have support in 
a loop-shaped region 
$\Loop$. There exist some loop function $l_{s, \gamma} \in \Cc_1(\RR^4)$ 
and some $f \in \Dc_2(\RR^4)$, both having their support 
in $\Loop$, such that $g = l_{s, \gamma} + \delta f$. 
Hence $g$ and $l_{s, \gamma}$ lie in the same co-cohomology class
relative to $\Loop$. 
\end{lemma}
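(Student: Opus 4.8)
The plan is to translate the statement into de Rham cohomology with compact supports and to exploit the fact that $\Loop$ is homotopy equivalent to the single loop $\gamma \simeq S^1$. Applying the Hodge operator $\star$, a co-closed vector field $g$ supported in $\Loop$ becomes a \emph{closed} $3$-form $\star g$ with compact support in $\Loop$, and the co-exact terms $\delta f$ correspond exactly to the exact $3$-forms $d\eta$ with $\eta = \pm \star f$ a compactly supported $2$-form. Thus the assertion $g = l_{s,\gamma} + \delta f$ with $f$ supported in $\Loop$ is equivalent to the equality of cohomology classes $[\star g] = [\star l_{s,\gamma}]$ in the compactly supported cohomology $H^3_c(\Loop)$.

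First I would determine this group. Since $\Loop \subset \RR^4$ is open, it is an oriented $4$-manifold, so Poincar\'e duality applies and gives $H^3_c(\Loop) \cong H^1(\Loop)$. As $\gamma$ is a deformation retract of $\Loop$, one has $H^1(\Loop) \cong H^1(S^1) \cong \RR$, so the relevant cohomology is one-dimensional. The isomorphism is realised concretely by the period map $I$ that integrates a closed compactly supported $3$-form over a $3$-dimensional cross-section $\Surf$ transverse to $\gamma$; this is well defined on classes by Stokes' theorem together with the vanishing of the forms near $\partial \Loop$, and it is nothing but the Poincar\'e pairing with the generator of $H_1(\Loop)$ carried by $\gamma$. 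Here the hypothesis that $\gamma$ is a simple spacelike loop admitting a tubular neighbourhood inside $\Loop$ is what makes such a transverse cross-section available.

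Next I would identify a generator. A direct computation of the period of the loop function gives $I(\star l_{s,\gamma}) = \pm \int dx\, s(x)$, so whenever $\int dx\, s(x) \neq 0$ the class $[\star l_{s,\gamma}]$ is nonzero and hence generates $H^3_c(\Loop)$; this simultaneously recovers the non-co-exactness of $l_{s,\gamma}$ quoted just before the lemma. To finish, I would set $c \doteq I(\star g)$ and, using that $l_{s,\gamma}$ is linear in $s$, rescale a fixed profile $s$ so that $I(\star l_{s,\gamma}) = c$. Then $\star g - \star l_{s,\gamma}$ has vanishing period, hence is trivial in $H^3_c(\Loop)$, so there is a smooth compactly supported $2$-form $\eta$ on $\Loop$ with $\star g - \star l_{s,\gamma} = d\eta$. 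Applying $\star$ back yields $g - l_{s,\gamma} = \delta f$ with $f \doteq \pm \star \eta \in \Dc_2(\RR^4)$ supported in $\Loop$ and real, which is the claim.

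The main obstacle is the compactly supported cohomology: the deformation retraction of $\Loop$ onto $\gamma$ immediately controls the ordinary cohomology $H^\bullet(\Loop)$ but not the compactly supported one, and it is precisely the requirement $\mathrm{supp}\, f \subset \Loop$ that forces us into $H^3_c(\Loop)$. The essential inputs are therefore Poincar\'e duality on the oriented manifold $\Loop$ together with the \emph{completeness} of the period invariant, namely that a closed compactly supported $3$-form with vanishing transverse period possesses a compactly supported primitive. Verifying this last point rigorously, rather than merely the homotopy-theoretic dimension count, is where the real work lies.
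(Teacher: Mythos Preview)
Your proposal is correct and follows essentially the same route as the paper's proof: both translate via the Hodge star to $H^3_c(\Loop)$, invoke Poincar\'e duality together with the homotopy equivalence $\Loop \simeq \gamma \simeq S^1$ to obtain $H^3_c(\Loop) \cong \RR$, and then observe that the loop functions $l_{s,\gamma}$ realise every class because their invariant is $\int dx\, s(x)$. Your version is slightly more explicit about the period pairing as the concrete isomorphism, while the paper passes through the algebraic dual $H^3_c(\Loop)^*$ and a finite-dimensionality argument, but the substance is identical.
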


\begin{proof}
This statement is equivalent to the statement that 
$\star g = \star l_{s, \gamma} + d \star f$, where $\star$ is the 
Hodge operator. In other words, the 
cohomology classes of these functions must be related by  
$[\star g] = [\star l_{s, \gamma}] \in H_{c}^3(\Loop)$, where 
$H_{c}^3(\Loop)$ denotes the third compact de~Rham cohomology group
of forms which are compactly supported 
in the given open region. Making use of 
Poincar\'e duality \cite[Thm.~5.12]{GrHaVa} and 
then of the fact that $\Loop$ is 
homotopic to $\gamma$ which, being simple, is homotopic to $S^1$, 
we get 
\begin{equation} \label{deRham}
 H_{c}^3(\Loop)^* \approx H^1(\Loop)
\approx H^1(\gamma) \approx H^1(S^1) \approx \RR \, . 
\end{equation}
Here $H_{c}^3(\Loop)^*$ stands for the algebraic dual of $H_{c}^3(\Loop)$, 
\ $H^1(\, \cdot \,)$ denotes the first cohomology group of the 
respective regions, and the symbol \ $\approx$ 
indicates isomorphisms between the cohomology groups. 
Since $ H_{c}^3(\Loop)^*$ is finite dimensional,
one has  $ H_{c}^3(\Loop)^* \approx  H_{c}^3(\Loop)$ which
implies $H_{c}^3(\Loop) \approx \RR$. 

Now let $l_{s, \gamma}$ be any of the loop functions constructed above. Then
$\star l_{s, \gamma}$ has support in $\Loop$ and 
$d \star l_{s, \gamma} = - \star \delta l_{s,\gamma} = 0$  
since $ l_{s, \gamma}$ is co-closed. Moreover,  $\star l_{s, \gamma}$
is exact iff $ l_{s, \gamma}$ is co-exact which is the case
iff $\int \! dx \, s(x) = 0$ for the underlying
scalar function~$s$, cf.~\cite[\S 1]{Roberts} and
the subsequent section for a proof. 
Since the possible values of $\int \! dx \, s(x)$ \  
exhaust $\RR$ and $H_{c}^3(\Loop) \approx \RR$, 
it follows that within the 
cohomology class of any given $\star g$ one can find 
the Hodge dual of a loop function, $\star l_{s, \gamma}$,
proving the statement.
\end{proof}

The following statement concerning the forms $\sigma_\pi$ in
\eqref{symplectic} is a consequence of the preceding 
result and the causal Poincar\'e lemma, established in 
the appendix of \cite{BuCiRuVa}. 

\begin{lemma} \label{l3.2}
Let $\Loop_1$, $\Loop_2$ be two spacelike separated 
loop-shaped regions and let $g_1, g_2 \in \Cc_1(\RR^4)$ have support in 
$\Loop_1$, respectively $\Loop_2$. 
Moreover, let $\sigma_\pi$ be linear in both entries. 
There are loop functions 
$l_{s_1, \gamma_1}, \, l_{s_2, \gamma_2} \in \Cc_1(\RR^4)$ 
in the co-cohomology classes of $g_1$, $g_2$ relative to 
$\Loop_1$, respectively $\Loop_2$, 
such that $\sigma_\pi(g_1, g_2) = \sigma_\pi(l_{s_1, \gamma_1}, l_{s_2, \gamma_2})$. 
If $g_1$ or $g_2$ belong to the trivial co-cohomology
class, this expression vanishes.
\end{lemma}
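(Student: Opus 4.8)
The plan is to use the assumed bilinearity of $\sigma_\pi$ to reduce everything to co-cohomology classes via Lemma~\ref{l3.1}. First I would apply Lemma~\ref{l3.1} in each region and write $g_1 = l_{s_1, \gamma_1} + \delta f_1$ and $g_2 = l_{s_2, \gamma_2} + \delta f_2$, where the loop functions and the $f_i \in \Dc_2(\RR^4)$ all have support in $\Loop_1$, respectively $\Loop_2$. Since $\sigma_\pi$ is linear in each slot, expanding gives
\begin{equation*}
\sigma_\pi(g_1, g_2) = \sigma_\pi(l_{s_1,\gamma_1}, l_{s_2,\gamma_2}) + \sigma_\pi(l_{s_1,\gamma_1}, \delta f_2) + \sigma_\pi(\delta f_1, l_{s_2,\gamma_2}) + \sigma_\pi(\delta f_1, \delta f_2) \, ,
\end{equation*}
so the first assertion is equivalent to the vanishing of the three cross terms that contain a co-exact entry.

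The crux is thus a single claim: $\sigma_\pi(h, g) = 0$ whenever $h = \delta f$ is co-exact with $\mbox{supp} \, f$ in one loop-shaped region and $g$ has support in the spacelike separated partner region. Here I would use that on a co-exact argument the potential reproduces the local field strength, $A_\pi(\delta f)$ being the representative of $F(f) \doteq A(\delta f)$ from Section~2. Relation~\eqref{a3} by itself yields only that $[A_\pi(h), A_\pi(g)]$ is central, so to pass from centrality to genuine vanishing I would invoke the causal Poincar\'e lemma of \cite{BuCiRuVa}. Its role is to exploit the triviality of the co-cohomology class of $h$: a co-exact datum can be realised through primitives whose supports are, after a suitable causal decomposition, contained in characteristic wedges opposite to $\mbox{supp} \, g$, \ie in regions $\Region$ with $\Region \, \Times \, \mbox{supp} \, g$. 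For each such wedge-separated piece relation~\eqref{a2} forces the corresponding unitaries to commute, and bilinearity of $\sigma_\pi$ then collapses the whole contribution to zero.

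Granting the claim, the first assertion is immediate, since only $\sigma_\pi(l_{s_1,\gamma_1}, l_{s_2,\gamma_2})$ survives. For the second assertion I would argue directly: if, say, $g_1$ lies in the trivial co-cohomology class relative to $\Loop_1$, then by the analysis in Lemma~\ref{l3.1} the accompanying loop function is co-exact (equivalently $\int \! dx \, s_1(x) = 0$), so that $g_1 = \delta f_1'$ for some $f_1' \in \Dc_2(\RR^4)$ with support in $\Loop_1$; the same claim then gives $\sigma_\pi(g_1, g_2) = \sigma_\pi(\delta f_1', g_2) = 0$ at once.

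I expect the decisive difficulty to be exactly the geometric content of the vanishing claim, namely converting the mere spacelike separation of $\Loop_1$ and $\Loop_2$ --- which through~\eqref{a3} only buys centrality --- into the stronger wedge separation demanded by~\eqref{a2}. This step genuinely uses the co-exactness: for the topologically non-trivial loop functions the two linked regions cannot be separated by opposite characteristic wedges at all, whereas a contractible (co-exact) datum can be pushed, via the causal Poincar\'e lemma, into wedge-separated position without altering its co-cohomology class. Pinning down that the causal Poincar\'e lemma supplies precisely such representatives, with support control compatible with the spacelike linking geometry, is the delicate point; the bilinear bookkeeping and the reduction to loop functions are then routine.
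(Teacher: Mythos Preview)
Your overall strategy---apply Lemma~\ref{l3.1} in each region, expand bilinearly, and reduce everything to the vanishing claim $\sigma_\pi(\delta f, g) = 0$ when $\mbox{supp}\, f \subset \Loop_1$ and $\mbox{supp}\, g \subset \Loop_2$---is correct and is exactly how the paper proceeds. The gap is in how you propose to establish that claim. You assign the causal Poincar\'e lemma the task of exploiting the co-exactness of $h = \delta f$, ``pushing'' it into wedge-separated position relative to $\mbox{supp}\, g$. But $h$ has fixed support; choosing a different co-primitive does not move it. What co-exactness actually buys is a primitive $f$ that one can cut up by a partition of unity into pieces $f_{1,m}$ supported in double cones $\Region_{1,m} \subset \Loop_1$, giving $\sigma_\pi(\delta f, g) = \sum_m \sigma_\pi(\delta f_{1,m}, g)$ by bilinearity. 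This alone does not finish, however: a double cone $\Region_{1,m}$ and the loop-shaped region $\Loop_2 \supset \mbox{supp}\, g$ are in general \emph{not} $\Times$-separated---a small double cone threaded by a linked loop cannot be separated from that loop by opposite characteristic wedges---so relation~\eqref{a2} does not yet apply to the pair $(\delta f_{1,m}, g)$.

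The paper's actual use of the causal Poincar\'e lemma is on the \emph{second} argument~$g$. For each fixed $m$, the spacelike complement of the double cone $\Region_{1,m}$ is topologically simple enough that the (merely co-closed) $g$ admits a co-primitive $f_{2,m} \in \Dc_2(\RR^4)$ with compact support there; this is precisely what the causal Poincar\'e lemma of \cite{BuCiRuVa} supplies. A second partition of unity then splits $f_{2,m}$ into double-cone pieces $f_{2,m,n}$; now $\Region_{1,m}$ and $\Region_{2,m,n}$ are spacelike-separated double cones, hence genuinely $\Times$-separated, and $\sigma_\pi(\delta f_{1,m}, \delta f_{2,m,n}) = 0$ by~\eqref{a2}. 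Summing over $m,n$ yields the claim. So the lemma enters not through the co-exactness of $h$ but through the co-closedness of $g$ combined with the trivial topology of double-cone complements; your bilinear bookkeeping and the argument for the second assertion are fine once this mechanism is in place.
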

\begin{proof}
By the preceding lemma there exists for given $g_1 \in \Cc_1(\RR^4)$,
having support in $\Loop_1$, a loop function  
$l_{s_1, \gamma_1} \in \Cc_1(\RR^4)$ and some $f_1 \in \Dc_2(\RR^4)$,
both having support in that region, such that 
$g_1 = l_{s_1, \gamma_1} + \delta f_1$.
By a partition of unity we split the function $f_1$ into a sum of 
test functions $f_{1, m} \in \Dc_2(\RR^4)$ having their supports 
in double cones $\Region_{1,m}$ in the spacelike complement of 
$\Loop_2$, \ $m = 1, \dots, M$, such that 
$f_1 = \sum_{m=1}^M f_{1,m}$. 

Now by the causal Poincar\'e lemma \cite{BuCiRuVa}
there exists for the given $g_2$ and each 
$m = 1, \dots , M$ a function
$f_{2,m} \in \Dc_2(\RR^4)$ that has compact support in the
spacelike complement of $\Region_{1,m}$ and satisfies $g_2 = \delta f_{2,m}$.
By another partition of unity 
one can proceed from $f_{2,m}$ to functions $f_{2,m,n} \in \Dc_2(\RR^4)$,
having support in double cones $\Region_{2,m,n}$ 
which are spacelike separated from the double cone $\Region_{1,m}$, 
$n = 1, \dots , N$, and  
$\sum_{n=1}^N f_{2,m,n} = f_{2,m}$ for $m = 1, \dots , M$.
After these preparations it follows from the postulated 
linearity properties of the form $\sigma_\pi$ and relation 
\eqref{a2} that 
$$
\sigma_\pi(g_1 - l_{s_1, \gamma_1}, g_2) =
\sum_{m=1}^M \sum_{n=1}^N \sigma_\pi(\delta f_{1,m}, \delta f_{2,m,n}) = 0 \, . 
$$
Thus $\sigma_\pi(g_1, g_2) = \sigma(l_{s_1, \gamma_1}, g_2)$, hence also 
$\sigma_\pi(g_2, l_{s_1, \gamma_1}) = \sigma_\pi(l_{s_2, \gamma_2}, l_{s_1, \gamma_1})$,
where $l_{s_2, \gamma_2}$ is any loop function in the co-cohomology class of $g_2$. 
Since $\sigma_\pi$ is skew symmetric, the first part of the 
statement follows. The second part is a consequence of the fact
that one may choose $l_{s_1, \gamma_1} = 0$ if $g_1$ belongs to the 
trivial co-cohomology class, and similarly for~$g_2$. 
\end{proof}

This lemma shows that the value of the form $\sigma_\pi(g_1, g_2)$ 
depends only on the co-cohomology classes of $g_1, g_2$ and the 
deformation retracts $\gamma_1, \gamma_2$ of the loop-shaped regions 
$\Loop_1, \Loop_2$, both being encoded in the loop functions 
$l_{s_1, \gamma_1}, \, l_{s_2, \gamma_2}$. We recall that for given 
loop $\gamma$, the co-cohomology classes
are fixed by the \textit{class values} which are given by the 
integral $\kappa \doteq \int \! dx \, s(x) \in \RR$ of the scalar
functions $s$ entering in the definition of~$l_{s, \gamma} \, $; 
changing the sense of traversal of $\gamma$ 
results in a sign change of $\kappa$. 
The support properties of $s$ are constrained by the 
condition \ $(\mbox{supp} \, s + \gamma) \subset \Loop$ for given 
loop-shaped region~$\Loop$. Thus for 
given loops $\gamma_1, \gamma_2$, 
the expression $\sigma_\pi(l_{s_1, \gamma_1}, \, l_{s_2, \gamma_2})$ 
is proportional to the product 
$\kappa_1 \kappa_2$ of the class values of the underlying 
loop functions.

It also follows from this lemma that by deforming 
the loops $\gamma_1, \gamma_2$ in $l_{s_1, \gamma_1}, \, l_{s_2, \gamma_2}$ 
to neighbouring loops $\gamma_1^\prime , \gamma_2^\prime$,
whilst keeping the product of their class values fixed, \ie 
\mbox{$\kappa_1^\prime \kappa_2^\prime = \kappa_1 \kappa_2$}, 
one can proceed to loop functions 
$l_{s_1^\prime, \gamma_1^\prime}, \, l_{s_2^\prime, \gamma_2^\prime}$ without changing the 
value of the symplectic form, 
$\sigma_\pi(l_{s_1, \gamma_1}, l_{s_2, \gamma_2}) = 
\sigma_\pi(l_{s_1^\prime, \gamma_1^\prime}, \, l_{s_2^\prime, \gamma_2^\prime})$. 
The latter loops are in turn deformation 
retracts of neighbouring loop-shaped regions $\Loop_1^\prime, \Loop_2^\prime$. 
Iterating this procedure, one can 
deform the spacelike loops $\gamma_1, \gamma_2$ 
to disjoint simple loops $\beta_1, \beta_2$ in the 
time zero plane~$\RR^3$, 
keeping the value of the symplectic form fixed,
\mbox{$\sigma_\pi(l_{s_1, \gamma_1}, l_{s_2, \gamma_2}) = 
\sigma_\pi(l_{s_1^\prime, \beta_1}, l_{s_2^\prime, \beta_2})$}. 
This deformation can be accomplished by 
jointly scaling the time coordinates of the loops to zero.
In the next step we show that
for fixed product $\kappa_1 \kappa_2$
the latter expression depends only on the 
homology class of $\beta_1$ in 
$\, \RR^3 \backslash \mbox{supp} \, \beta_2$. 

\begin{lemma} \label{l3.3}
Let $l_{s_1, \beta_1}, \, l_{s_2, \beta_2}$ be loop functions,
having spacelike separated supports,
which are assigned to disjoint linked loops 
$\beta_1, \beta_2 \subset \RR^3$. 
Moreover, let 
$\widetilde{\beta}_1 \subset \RR^3 \backslash \mathrm{supp} \, \beta_2$ 
be in the same homology class as $\beta_1$ with regard to 
$\RR^3 \backslash \mathrm{supp} \, \beta_2$. 
There exists a loop function $l_{\widetilde{s}_1, \widetilde{\beta}_1}$
in the co-cohomology class of $l_{s_1, \beta_1}$ relative  
to the spacelike complement of the support of 
$l_{s_2, \beta_2}$, such that 
$\sigma_\pi(l_{s_1, \beta_1}, l_{s_2, \beta_2}) = 
\sigma_\pi(l_{\widetilde{s}_1, \widetilde{\beta}_1}, l_{s_2, \beta_2})$. 
\end{lemma}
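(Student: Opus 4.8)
The plan is to reduce the statement to the co-cohomology criterion already used in Lemmas~\ref{l3.1} and~\ref{l3.2}. Under the standing bilinearity hypothesis on $\sigma_\pi$ inherited from Lemma~\ref{l3.2}, it suffices to exhibit a mollifier $\widetilde{s}_1$ such that $l_{\widetilde{s}_1, \widetilde{\beta}_1}$ lies in the co-cohomology class of $l_{s_1, \beta_1}$ relative to the spacelike complement $\Region$ of $\mathrm{supp}\, l_{s_2, \beta_2}$; the equality of the two symplectic pairings with $l_{s_2, \beta_2}$ then follows from the argument of Lemma~\ref{l3.2}, which annihilates any co-exact remainder supported in $\Region$.

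First I would pin down the topology of $\Region$. Since $\mathrm{supp}\, \beta_2$ sits in the time-zero plane, a point $(\bix, t)$ is spacelike to every point of a tubular neighbourhood $R$ of $\beta_2$ exactly when $|t| < \mathrm{dist}(\bix, R)$; scaling $t$ to zero retracts the region so described onto $\RR^3 \backslash R$, which is homotopy equivalent to the loop complement $\RR^3 \backslash \mathrm{supp}\, \beta_2$. By Alexander duality the latter has $H^1 \approx \RR$, with generator represented by a closed $1$-form $\theta$ whose periods over loops compute the linking number with $\beta_2$. Poincar\'e duality \cite[Thm.~5.12]{GrHaVa}, exactly as in Lemma~\ref{l3.1}, then yields $H^3_c(\Region) \approx H^1(\Region)^* \approx \RR$, the pairing being $[\alpha] \mapsto \int_\Region \alpha \wedge \theta$.

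Next I would compute the co-cohomology class of a loop function in this region. As in Lemma~\ref{l3.1}, $l_{s, \beta_1}$ is co-closed, so $\star l_{s, \beta_1}$ is a closed, compactly supported $3$-form in $\Region$; being, up to the mollification by $s$, the de~Rham dual of the $1$-cycle $\beta_1$ with weight $\kappa_1 = \int \! dx\, s(x)$, its class pairs with $\theta$ to give $\kappa_1 \cdot \mathrm{lk}(\beta_1, \beta_2)$. Choosing $\widetilde{s}_1$ with small enough support that $l_{\widetilde{s}_1, \widetilde{\beta}_1}$ stays spacelike to $l_{s_2, \beta_2}$ and with $\int \! dx\, \widetilde{s}_1(x) = \kappa_1$, and recalling that $[\beta_1] = [\widetilde{\beta}_1]$ in $\RR^3 \backslash \mathrm{supp}\, \beta_2$ forces $\mathrm{lk}(\widetilde{\beta}_1, \beta_2) = \mathrm{lk}(\beta_1, \beta_2)$, the two classes in $H^3_c(\Region) \approx \RR$ coincide. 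Hence $\star l_{s_1, \beta_1} - \star l_{\widetilde{s}_1, \widetilde{\beta}_1} = d\eta$ for some compactly supported $2$-form $\eta$ in $\Region$; by the same Hodge-duality equivalence used in Lemma~\ref{l3.1} this reads $l_{s_1, \beta_1} - l_{\widetilde{s}_1, \widetilde{\beta}_1} = \delta f$ with $f = \pm \star \eta \in \Dc_2(\RR^4)$ supported in $\Region$, exhibiting $l_{\widetilde{s}_1, \widetilde{\beta}_1}$ in the desired co-cohomology class.

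Finally, the co-exact remainder is removed verbatim as in Lemma~\ref{l3.2}: a partition of unity splits $f$ into pieces supported in double cones spacelike to $\beta_2$, for each of which the causal Poincar\'e lemma of \cite{BuCiRuVa} writes $l_{s_2, \beta_2} = \delta(\,\cdot\,)$ with support spacelike to that double cone, so relation~\eqref{a2} and bilinearity give $\sigma_\pi(\delta f, l_{s_2, \beta_2}) = 0$ and therefore $\sigma_\pi(l_{s_1, \beta_1}, l_{s_2, \beta_2}) = \sigma_\pi(l_{\widetilde{s}_1, \widetilde{\beta}_1}, l_{s_2, \beta_2})$. I expect the main obstacle to be the third step: establishing that the co-cohomology class of the loop function in the four-dimensional region $\Region$ is governed entirely by the linking number of $\beta_1$ with $\beta_2$, weighted by the class value. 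This fuses the homotopy analysis of the spacelike complement with the Poincar\'e-dual interpretation of $\star l_{s, \beta_1}$, and once it is in place the equality of homology classes translates directly into equality of co-cohomology classes, after which the Lemma~\ref{l3.2} machinery closes the argument.
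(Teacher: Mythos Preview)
Your argument is correct, but it takes a different route from the paper's. The paper proceeds constructively: from the homology relation $\beta_1 - \widetilde{\beta}_1 = \sum_k m_k\,\partial\Surf_k$ it writes down an \emph{explicit} two-form $f = \sum_k m_k f_k$, with each $f_k$ given by an integral of the mollifier $\widetilde{s}_1$ over the surface $\Surf_k$, and verifies by direct computation that $\delta f_k = l_{\widetilde{s}_1,\,\partial\Surf_k}$, hence $\delta f = l_{\widetilde{s}_1,\beta_1} - l_{\widetilde{s}_1,\widetilde{\beta}_1}$. This bypasses any analysis of the global topology of $\Region$: one only needs that the thickened surfaces $\mathrm{supp}\,\widetilde{s}_1 + \Surf_k$ stay spacelike to $l_{s_2,\beta_2}$, which is arranged by shrinking the mollifier. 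Your approach instead computes $H^3_c(\Region)\approx\RR$ via Poincar\'e and Alexander duality, identifies the class of $\star l_{s,\beta}$ with $\kappa\cdot\mathrm{lk}(\beta,\beta_2)$, and invokes equality of linking numbers for homologous loops to conclude that the difference is co-exact. This is more conceptual and makes the role of the linking number transparent, at the cost of heavier topological input; the paper's construction is more elementary and furnishes $f$ by formula. One small point: your description of the retraction of $\Region$ onto the time-zero slice tacitly treats $\mathrm{supp}\,l_{s_2,\beta_2}$ as lying in $\{t=0\}$, whereas it is a genuine four-dimensional tube; the time-scaling retraction still works (e.g.\ after choosing $s_2$ even in time, so that the tube is time-symmetric), but this deserves a word of justification.
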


\noindent 
{\bf Remark.} Since the notion of homology is weaker than homotopy, 
this lemma applies 
in particular to loops which are homotopy equivalent. 

\begin{proof}
According to the definition of homology equivalence,  
cf.~\cite[Ch.~2.1]{Hat}, there are singular surfaces 
$\Surf_k \subset \RR^3 \backslash \mbox{supp} \, 
\beta_2$ and integers $m_k \in \ZZ$, $k = 1, \dots , n$, such that 
$(\beta_1 - \widetilde{\beta}_1) = 
\sum_k \, m_k \, \partial \Surf_k$, where $\partial \Surf_k$
denotes the boundary of $\Surf_k$. Given the class value
$\kappa_1$ of $l_{s_1, \gamma_1}$, one picks a scalar function
$\widetilde{s}_1 \in \Dc_0(\RR^4)$ with 
$\int \! dx \, \widetilde{s}_1(x) = \kappa_1$ such that
$(\mbox{supp} \, \widetilde{s}_1 + \Surf_k)$ lies in the spacelike complement
of the support of $l_{s_2, \gamma_2}$ for all $k$. Note that a replacement
of the scalar function $s_1$ in the definition of 
$l_{s_1 , \gamma_1}$ by $\widetilde{s}_1$ does not change 
its co-cohomology class and the value of 
$\sigma_\pi(l_{s_1, \gamma_1}, l_{s_2 , \gamma_2})$, c.f.~Lemma \ref{l3.2}. 

With this input one defines the function 
$f \doteq \sum_k m_k f_k \in \Dc_2(\RR^4)$, having support in 
the spacelike complement of the support of $l_{s_2 , \gamma_2}$,
where,   
\begin{align*}
& x \mapsto f_k^{\mu \nu}(x) \doteq \\
& 
\int_0^1 \! dt \int_0^{1-t} \! du \ \widetilde{s}_1(x - \Surf_k(t,u)) \,
\big( \partial_t \Surf_k(t,u)^\mu \, \partial_u \Surf_k(t,u)^\nu
- \partial_t \Surf_k(t,u)^\nu \partial_u \Surf_k(t,u)^\nu \big) \, .
\end{align*}
By a routine computation, one obtains  
$\delta f_k = l_{\widetilde{s}_1, \partial \Surf_k}$ for $k = 1, \dots , n$,
giving 
$$
\delta f = \sum_k m_k \, \delta f_k = \sum_k m_k \, 
l_{\widetilde{s}_1, \partial \Surf_k} = 
l_{\widetilde{s}_1, \beta_1}  - l_{\widetilde{s}_1, \widetilde{\beta}_1} \, .
$$
Decomposing $f$ into a sum of test functions having their 
supports in double cones in the spacelike complement of the 
support of~$l_{s_2, \beta_2}$, it follows from Lemma \ref{l3.2} and
the argument given there that  
$$
\sigma_\pi(l_{s_1, \beta_1}, l_{s_2, \beta_2}) - 
\sigma_\pi(l_{\widetilde{s}_1, \widetilde{\beta_1}}, l_{s_2, \beta_2})
= \sigma_\pi(l_{\widetilde{s}_1, \beta_1} - 
l_{\widetilde{s}_1, \widetilde{\beta}_1}, l_{s_2, \beta_2})
= \sigma_\pi(\delta f,  l_{s_2, \beta_2}) = 0 \, ,
$$
completing the proof.  
\end{proof}

We have now the information needed 
to prove the main result of this section. 

\begin{proposition}
Let $\Loop_1$, $\Loop_2$ be two spacelike separated   
loop-shaped regions which can continuously be retracted to 
spacelike linked loops $\gamma_1$ and $\gamma_2$, respectively. 
Moreover, let $\sigma_\pi$ be linear in both entries. 
Then, for any $g_1, g_2 \in \Cc_1(\RR^4)$ having support in 
$\Loop_1$, respectively  $\Loop_2$, one has 
$\sigma_\pi(g_1, g_2) = 0$. 
\end{proposition}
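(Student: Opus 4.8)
The plan is to chain the three preceding lemmas so as to reduce the claim to a statement about a single pair of loop functions attached to loops lying in one time-zero plane, and then to play the skew-symmetry of $\sigma_\pi$ against the manifest symmetry of the underlying topological data.

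First I would use Lemma \ref{l3.1} to replace $g_1,g_2$ by loop functions. Lemma \ref{l3.2}, together with the assumed bilinearity of $\sigma_\pi$, then gives $\sigma_\pi(g_1,g_2)=\sigma_\pi(l_{s_1,\gamma_1},l_{s_2,\gamma_2})$, where $l_{s_i,\gamma_i}$ lies in the co-cohomology class of $g_i$ relative to $\Loop_i$, and this value vanishes unless both classes are non-trivial. Assuming non-triviality, the discussion following Lemma \ref{l3.2} lets me deform the spacelike loops $\gamma_1,\gamma_2$ jointly into disjoint simple loops $\beta_1,\beta_2\subset\RR^3$ in the time-zero plane by scaling their time coordinates to zero, keeping the value of $\sigma_\pi$ fixed, so that $\sigma_\pi(g_1,g_2)=\sigma_\pi(l_{s_1,\beta_1},l_{s_2,\beta_2})$.

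Next I would record, from the same discussion and from Lemma \ref{l3.3}, the precise dependence of the remaining expression on the two loops. It is proportional to the product $\kappa_1\kappa_2$ of the class values, and by Lemma \ref{l3.3} it depends on $\beta_1$ only through its homology class in $\RR^3\setminus\mathrm{supp}\,\beta_2$, i.e. through the linking number $\ell$ of $\beta_1$ and $\beta_2$. Since every reduction step treats the two slots in the same way, the reduced value is governed by a single universal function $\lambda$ of this linking number, $\sigma_\pi(l_{s_1,\beta_1},l_{s_2,\beta_2})=\kappa_1\kappa_2\,\lambda(\ell)$. When $\ell=0$ one is already finished, since then $\beta_1$ is null-homologous in $\RR^3\setminus\mathrm{supp}\,\beta_2$ and Lemma \ref{l3.3} permits the trivial representative $\widetilde\beta_1$, whose loop function is co-exact in the spacelike complement of $\beta_2$, so that $\sigma_\pi=0$ by the argument of Lemma \ref{l3.2}.

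The decisive observation is that both $\kappa_1\kappa_2$ and the linking number $\ell$ are symmetric under interchange of the two loops, whence $\sigma_\pi(l_{s_1,\beta_1},l_{s_2,\beta_2})=\sigma_\pi(l_{s_2,\beta_2},l_{s_1,\beta_1})$. Since $\sigma_\pi$ is skew-symmetric by \eqref{symplectic}, this quantity also equals its own negative and therefore vanishes, giving $\sigma_\pi(g_1,g_2)=0$. I expect the main obstacle to lie in the third paragraph: making rigorous that the reduced value is indeed given by one and the same function of the symmetric data $(\kappa_1\kappa_2,\ell)$ regardless of which loop occupies which slot, equivalently that interchanging the two loop functions in the reduction produces the same topological weight. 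This requires verifying that the meridian representatives furnished by Lemma \ref{l3.3} may be taken with mutually spacelike supports so that relation \eqref{a2} applies, and that the weight is insensitive to the non-canonical choices of scalar functions and spanning surfaces entering the loop functions.
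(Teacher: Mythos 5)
Your overall strategy coincides with the paper's: reduce to loop functions via Lemma~\ref{l3.2}, deform the loops into the time-zero plane, invoke Lemma~\ref{l3.3} together with Alexander duality to see that only homological (linking) data survive, and then play the symmetry of that data against the skew symmetry of $\sigma_\pi$ from \eqref{symplectic}. The problem is that your decisive third paragraph --- the existence of a single universal function $\lambda$ with $\sigma_\pi(l_{s_1,\beta_1},l_{s_2,\beta_2})=\kappa_1\kappa_2\,\lambda(\ell)$, the \emph{same} function no matter which loop occupies which slot --- is asserted (``every reduction step treats the two slots in the same way'') rather than proved, and you yourself flag it as the main obstacle. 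It is not automatic: Lemma~\ref{l3.3} holds the second loop fixed and varies the first within its homology class in the complement of the second, and the identification $H_1(\RR^3\setminus\mathrm{supp}\,\beta_2)\approx\ZZ$ depends on $\beta_2$; so what the lemmas directly yield is a number attached to the \emph{ordered} pair, and nothing proved so far forces the two orderings to give the same number. Since your entire conclusion rests on deducing $X=-X$ from the unproved identity $X=\sigma_\pi(l_{s_2,\beta_2},l_{s_1,\beta_1})$, the proof is genuinely incomplete at exactly the point where all the work lies.

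The paper closes this gap with a concrete geometric move that your outline lacks. Using $H_1(\RR^3\setminus\beta)\approx\ZZ$, it replaces $\widetilde{\beta}_1,\widetilde{\beta}_2$ by a fixed pair of linked circles $\zeta_1,\zeta_2$ of equal radius forming a Hopf link, traversed $n_1$ resp.\ $n_2$ times, so that by multilinearity the value becomes $n_1 n_2\,\sigma_\pi(l_{s_1,\zeta_1},l_{s_2,\zeta_2})$. The point of this normal form is that the two components of a symmetric Hopf link can be \emph{continuously exchanged} while keeping their distance strictly positive; applying Lemma~\ref{l3.3} along that motion (each small step is a homology in the complement of the other loop, with spacelike separation of the supports maintained throughout, so relation \eqref{a2} remains applicable) gives $\sigma_\pi(l_{s_1,\zeta_1},l_{s_2,\zeta_2})=\sigma_\pi(l_{s_2,\zeta_2},l_{s_1,\zeta_1})$, which then vanishes by skew symmetry. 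This interchangeability of the Hopf link, realized as an explicit isotopy, is precisely the ``same topological weight regardless of slot'' that you need; supplying the reduction to the Hopf normal form together with this exchange isotopy turns your outline into the paper's proof.
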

\begin{proof}
According to Lemma~\ref{l3.2} there exist loop 
functions $l_{s_1, \gamma_1}$, $l_{s_2, \gamma_2}$ with corresponding 
class values $\kappa_1$, $\kappa_2$ such that 
 $\sigma_\pi(g_1, g_2) = \sigma_\pi(l_{s_1, \gamma_1}, l_{s_2, \gamma_2})$. 
Disregarding the trivial case, where one of the class
values is zero, we cancel the inherent product 
$\kappa_1 \kappa_2$ and assume that $\kappa_1 = \kappa_2 = 1$.
Moreover, we assume 
that the supports of the scalar smearing functions $s_1, s_2$
are chosen such that the resulting loop functions always have
spacelike separated supports. 

Making use again of 
Lemma~\ref{l3.2}, one can deform the spacelike separated 
loops $\gamma_1, \gamma_2 \subset \RR^4$ to disjoint loops 
$\beta_1, \beta_2 \subset \RR^3$ without changing
the value of the symplectic form, \ie 
$\sigma_\pi(g_1, g_2) = \sigma_\pi(l_{s_1, \beta_1}, l_{s_2, \beta_2})$. 
By Lemma~\ref{l3.3} one can then proceed from
the loop $\beta_1$ to any other loop 
$\widetilde{\beta}_1$ in the same homology class 
within the region $\RR^3 \backslash \mbox{supp} \, \beta_2$,
retaining the value 
of the form, $\sigma_\pi(l_{s_1, \beta_1}, l_{s_2, \beta_2}) =
\sigma_\pi(l_{{s}_1, \widetilde{\beta}_1} , l_{s_2, \beta_2})$. 
Similarly, one can replace the loop 
$\beta_2$ by any other loop $\widetilde{\beta}_2$
in the same homology class 
relative to $\RR^3 \backslash \mbox{supp} \, \widetilde{\beta}_1$.
Taking into account that $\sigma_\pi$ is skew symmetric, this gives 
altogether 
$$
\sigma_\pi(g_1, g_2)
= \sigma_\pi( l_{{s}_1, \widetilde{\beta}_1}, l_{s_2, \beta_2}) 
= - \sigma_\pi( l_{s_2, \beta_2}, l_{{s}_1, \widetilde{\beta}_1})
= - \sigma_\pi( l_{{s}_2, \widetilde{\beta}_2}, 
l_{{s}_1, \widetilde{\beta}_1})
= \sigma_\pi( l_{{s}_1, \widetilde{\beta}_1}, 
l_{{s}_2, \widetilde{\beta}_2}) \, .
$$

Now the initial curves 
${\beta}_1, {\beta}_2 \subset \RR^3$
are simple, so their supports are homeomorphic to 
the circle $S^1$.  For any such curve $\beta$, 
one has for the homology group   
$H_1(\RR^3 \backslash \beta ) \approx H^1( \beta) \approx \ZZ$ 
by Alexander duality \cite[Ch.~VI, Cor.~8.6]{Br}. 
(Note that we are dealing here with homology 
groups having coefficients in $\ZZ$.)  
Hence one can choose for 
$\widetilde{\beta}_1, \widetilde{\beta}_2$
linked circles ${\zeta}_1, {\zeta}_2$
of equal radius (forming a Hopf link) which, by convention,  
are traversed in positive direction in their respective planes;
depending on the homology classes of the loops 
$\widetilde{\beta}_1, \widetilde{\beta}_2$, these circles 
are traversed $n_1$, respectively $n_2$ times, $n_1, n_2 \in \ZZ$,
giving 
$\sigma_\pi(l_{{s}_1, \widetilde{\beta}_1}, 
l_{{s}_2, \widetilde{\beta}_2})
= n_1 n_2 \, \sigma_\pi(l_{s_1, \zeta_1}, l_{s_2, \zeta_2})$.
Finally, one continuously exchanges the pair of circles
$\zeta_1, \zeta_2$
into $\zeta_2, \zeta_1$
while keeping their distance greater than zero.  
By another application of Lemma \ref{l3.3}, one then obtains 
$\sigma_\pi(l_{s_1, \zeta_1}, l_{s_2, \zeta_2})
= \sigma_\pi(l_{s_2, \zeta_2}, l_{s_1, \zeta_1})
= 0$, where the second equality follows from the skew symmetry 
of $\sigma_\pi$. Since 
$\sigma_\pi(g_1, g_2) = n_1 n_2 \, \sigma_\pi(l_{s_1, \zeta_1}, l_{s_2 \zeta_2})$,
this completes the proof of the statement. 
\end{proof}

So we conclude that the topological charges exhibited in \cite{BuCiRuVa}
are trivial in any regular representation of the universal algebra 
in which the corresponding intrinsic vector potential depends linearly on 
the elements of $\Cc_1(\RR^4)$. 

\section{A vacuum state with non-trivial topological charge
\label{go-go}}
\setcounter{equation}{0}

In this section we present a regular vacuum representation of the universal 
algebra with a non-trivial topological charge. This simple example derives 
from the free electromagnetic field. Roughly speaking,
we merge the electric and magnetic parts of this linear 
field in a non-linear manner,
thereby obtaining a spacelike linear field carrying a topological charge.
The resulting Haag-Kastler net coincides with the original net, so our 
construction relies on a re-interpretation of the original 
theory.  It shows that the universal 
algebra gives leeway to the appearance of topological charges.

We proceed from the regular vacuum state $\omega_0$ on the universal
algebra, giving rise to a vanishing electric current and a linear 
free vector potential \cite{BuCiRuVa}. Its GNS representation is denoted by 
$(\pi_0, \Hil_0, \Omega_0)$ and the resulting vector potential by $A_0$ 
with domain $\Dom_0 \subset \Hil_0$. 
As is well known, this special vacuum state is quasi-free, so 
the correlation functions of the potential are 
fixed by the two-point function
$$
\langle \Omega_0, A_0(g_1) A_0(g_2) \Omega_0 \rangle 
= - (2 \pi)^{-3} \! \int \! dp \, \theta(p_0) \delta(p^2) \, 
\widehat{g}_1(-p) \widehat{g}_2(p) \, , \ \ 
g_1, g_2 \in \Cc_1(\RR^4) \, . 
$$
Here $\widehat{g}$ denotes the Fourier transform of $g$ and the product
of functions is defined by the Lorentz scalar product of their components. 
We recall that there exists on $\Hil_0$ a continuous unitary representation
$U_0$ of the Poincar\'e group $\Poin_+^\uparrow$ that satisfies the
relativistic spectrum condition, leaves $\Omega_0$ 
invariant and induces Poincar\'e transformations of the vector potential, 
$U_0(P) A_0(g) U_0(P)^{-1} = A_0(g_P)$ for $P \in \Poin_+^\uparrow$. 

For the construction of a potential which is associated with a 
topological charge we 
make use of the following facts. Given $g \in \Cc_1(\RR^4)$, 
let $G \in \Dc_2(\RR^4)$ be any of its co-primitives, 
\ie $g = \delta G$. If $G, G^\prime$ are two such co-primitives
for the given $g$,
there exists a test function with values in totally
antisymmetric tensors of rank three, $k \in \Dc_3(\RR^4)$,
such that $G - G^\prime = \delta k$. Thus, using proper
coordinates, the tensor 
$$
\overline{G}^{\, \mu \nu} \doteq \int \! dx \ G^{\mu \nu}(x)
$$
depends only on $g$ and not on the chosen co-primitive $G$. 
Moreover, it is invariant under translations 
and transforms covariantly under Lorentz transformations 
of the underlying $g$.
Second, also the operators 
$A_0(\delta \star G)$ depend only on $g \in \Cc_1(\RR^4)$, but not on the
chosen co-primitive $G \in \Dc_2(\RR^4)$. This follows from 
$$
A_0(\delta \star \delta k) = - A_0(\delta d \star k) = - j_0(\star k) = 0 \, , 
\quad k \in \Dc_3(\RR^4) \, ,$$
bearing in mind that the electric current $j_0$ vanishes 
according to our assumptions. 

Now let $\theta_\pm$ be the characteristic functions of the positive,
respectively negative reals (where $0$ is excluded) and let, 
in proper coordinates,  
$\overline{G}^2 \doteq \overline{G}_{\mu \nu} \, \overline{G}^{\mu \nu}$.  
Given any~$g \in \Cc_1(\RR^4)$ we decompose it into  
functions having their supports in disjoint connected
regions, $g = \sum_n g_n$, and define on the domain $\Dom_0$ a 
\textit{topological potential}, putting 
\begin{equation} \label{newpotential}
A_T(g) \doteq \sum_n \big( \theta_+(\overline{G_n}^2) \, A_0(\delta G_n) 
+ \theta_-(\overline{G_n}^2) \, A_0(\delta \star G_n) \big)  \, .
\end{equation}
This definition is clearly meaningful for test functions 
$g$ which can be 
split into a finite sum of functions having disjoint 
connected supports. But since $A_0$ is an operator-valued distribution, 
it is not difficult to show that this definition can be extended 
by continuity to arbitrary test functions. 

Omitting this step, let us turn to the proof that the operators 
$A_T(g)$, $g \in \Cc_1(\RR^4)$, arise
as vector potentials of some regular vacuum representation of the universal 
algebra that carries a non-trivial topological charge. 
We begin by recalling that 
the domain $\Dom_0$ is a common core for the operators $A_0(g)$,
$g \in \Cc_1(\RR^4)$, which is stable under the action of the
corresponding exponential functions (Weyl operators). Since 
the step from $A_0$ to $A_T$ is based on some non-linear transformation
of the test functions, c.f.\ below, it follows that 
the operators $A_T(g)$ share this domain property. In particular, 
their exponentials \ $e^{i a A_T(g)}$ satisfy condition 
\eqref{a1} for $a \in \RR$, $g \in \Cc_1(\RR^4)$. 

For the proof that $A_T$ is spacelike linear, \ie 
condition \eqref{a2} holds for its exponentials, we note 
that for any pair $g_1, g_2 \in \Cc_1(\RR^4)$ satisfying  
$\mbox{supp} \, g_1 \Times \mbox{supp} \, g_2$, all
functions $g_{1,m}$, $g_{2,n}$ appearing in their 
respective decompositions satisfy this condition as well. 
Now given functions $g_{1,m}, g_{2,n} \in \Cc_1(\RR^4)$ with
$\mbox{supp} \, g_{1,m} \Times \mbox{supp} \, g_{2,n}$, 
it follows from the local  
Poincar\'e lemma \cite{BuCiRuVa} that there are corresponding co-primitives 
$G_{1,m}, G_{2,n} \in \Dc_2(\RR^4)$ satisfying 
$\mbox{supp} \, G_{1,m} \Times \mbox{supp} \, G_{2,n}$. 
Since the initial potential $A_0$ satisfies condition 
\eqref{a2} and since the Hodge operator~$\star$ as well as the 
co-derivative $\delta$ do not impair the localisation properties of 
test functions, it follows from the defining relation 
\eqref{newpotential} that $[A_T(g_1), A_T(g_2)] = 0$ 
if $\mbox{supp} \, g_1 \Times \mbox{supp} \, g_2$, as claimed. 

Next, for given $g \in \Cc_1(\RR^4)$ with 
co-primitive $G \in \Dc_2(\RR^4)$ and any Poincar\'e transformation 
$P \in \Poin_+^\uparrow$, the transformed 
tensor $G_P$ is a co-primitive of the transformed $g_p$. Since 
$\overline{G_p}^2 = \overline{G}^2$ and disjoint sets are mapped to
disjoint sets by Poincar\'e transformations, it follows from 
relation \eqref{newpotential} that \ 
$U_0(P) A_T(g) U_0(P)^{-1} = A_T(g_P)$ \ for $P \in \Poin_+^\uparrow$ 
and $g \in \Cc_1(\RR^4)$. Thus the topological potential $A_T$ transforms 
covariantly under the given representation of the Poincar\'e 
group and the interpretation of the vector $\Omega_0$ 
as vacuum state does not change. 
What changes, however, are the expectation
values of the elements of the universal algebra. As has been
explained in \cite{BuCiRuVa}, these are fixed by the generating
function of the vector potential. 

In the case at hand, the generating function 
can easily be obtained from the one of the free field. 
To explicate this we introduce the notation 
$$
G_T \doteq \sum_n \big( \theta_+(\overline{G_n}^2) \, G_n
+ \theta_-(\overline{G_n}^2) \star G_n \big) \, ,
$$

\vspace*{-2mm} \noindent
where we recall that the $G_n \in \Dc_2(\RR^4)$ are co-primitives of 
the different disjoint portions $g_n$ of 
$g \in \Cc_1(\RR^4)$, $n \in \NN$. Since $A_0$ depends linearly on 
test functions, this yields  the equality 
$A_T(g) = A_0(\delta G_T)$. 
Denoting the  
topological vacuum state on the universal algebra $\fV$ by~$\omega_T$,
its generating function is given by, $a \in \RR$ and $g \in \Cc_1(\RR^4)$, 
\begin{align} \label{generating}
& \omega_T(V(a,g)) & \nonumber \\ 
& \doteq \langle \Omega_0, 
e^{\mbox{\footnotesize  $ \, i a A_T(g)$ }} 
\Omega_0 \rangle = 
e^{\mbox{\footnotesize 
$- a^2 (2 \pi)^{-3} \! \int \! dp \, \theta(p_0) \delta(p^2) \,
( p \, \widehat{G_T}(-p) ) 
( p \, \widehat{G_T}(p) ) / 2 $ }} \, , &
\end{align}
where, in proper coordinates,  
$(p  \, \widehat{G_T}(p))_\mu = p^\nu \widehat{G_T}(p)_{\mu \nu}$.
Thus the GNS representation induced by $\omega_T$ 
acts on $\Hil_0$ and the cyclic GNS vector $\Omega_0$ represents 
the state. The corresponding homomorphism $\pi_T$, mapping
the elements of the universal algebra $\fV$ to  bounded
operators, is given by 
$\pi_T(V(a,g)) \doteq e^{\, i a A_0(\delta G_T)}$ 
for $a \in \RR$, $g \in \Cc_1(\RR^4)$.
Since the free field $A_0$ has c-number commutation relations,  
one obtains for the commutator of the topological potential 
\mbox{$[A_T(g_1), A_T(g_2)] = [A_0(\delta G_{1 T}), 
A_0(\delta G_{2 T})] \in \CC \, 1$}. So the
commutator lies in the centre of the represented algebra
and condition~\eqref{a3} is clearly satisfied.  

It remains to prove that the topological vacuum state 
$\omega_T$ gives rise to some non-trivial topological charge, 
\ie that there exist functions with spacelike separated, linked supports,  
$\mbox{supp} \, g_1 \perp \mbox{supp} \, g_2$,
for which the commutator of the topological vector 
potential has values different from zero. 
We restrict attention here to functions which both have their 
supports in some connected region. Then one obtains
\begin{align} \label{commutator}
[A_T(g_1), A_T(g_2)] & = 
\big( \theta_+(\overline{G_1}^2) \theta_+(\overline{G_2}^2) 
+ \theta_-(\overline{G_1}^2) \theta_-(\overline{G_2}^2) \big) \,
\Delta(G_1, G_2) \, 1 \nonumber \\
& + \big( \theta_+(\overline{G_1}^2) \theta_-(\overline{G_2}^2) 
- \theta_+(\overline{G_2}^2) \theta_-(\overline{G_1}^2) \big) \,
\Delta(G_1, \star G_2) \, 1 \, .
\end{align}
Here $\Delta$ is the commutator function of the free Maxwell field,
\begin{equation} \label{pauli-jordan}
\Delta(G_1, G_2) \doteq (2 \pi)^{-3} 
\int \! dp \, \epsilon(p_0) \delta(p^2) \,
\big( p \, \widehat{G_1}( -p) \big) \! \big( p \, \widehat{G_2}(p) \big) \, ,
\end{equation}
and we made use of the fact that the free field and its Hodge dual
have the same commutator function, 
$\Delta(\star \, G_1, \star \, G_2) = \Delta(G_1, G_2)$. 
Moreover, since \mbox{$\star \star \upharpoonright \Dc_2(\RR^4) = -1$}, we
have $\Delta(G_2, \star \, G_1) = - \Delta(\star \, G_1, G_2)
= \Delta(\star \, G_1, \star \! \star G_2) = \Delta(G_1, \star \, G_2)$,
\ie the latter distribution is symmetric in $G_1, G_2$. 

Relying on results by Roberts \cite{Roberts}, we will make use of the 
appearance of the distribution $\Delta(G_1, \star \, G_2)$
in the commutator of the topological field. Roberts has
shown that this term has non-trivial values for certain pairs of 
test functions $g_1, g_2 \in \Cc_1(\RR^4)$ with spacelike 
separated supports. But we have to 
control here also the connectivity of their supports
as well as the pre-factor of the Roberts term 
appearing in relation~\eqref{commutator}. 

Choosing proper coordinates, let
$[0,1] \ni t \mapsto \gamma^{(i k)} (t)$ be unit circles 
which are centred at time zero at the origin of the $i$-$k$-plane, 
$i,k \in \{ 1,2,3 \}$. We consider the corresponding loop functions 
$x \mapsto g^{(ik)}(x) \doteq \int_0^1 \! dt \, s(x - \gamma^{(i k)}(t)) 
\DOT{\gamma}{}^{(i k)}(t)$, where $s$ is a scalar test function
with support in a small neighbourhood of $0 \in \RR^4$. 
It is convenient to choose functions of the form 
$x \mapsto s^{(i k l)}(x) \doteq a(x_0,x_l) \, b(x_i^2 + x_k^2)$,
$i,k,l \in \{ 1,2,3 \}$. Then one obtains 
$ x \mapsto g^{(i k)}_\mu(x) = ( -\delta_{\mu i} \, x_k + \delta_{\mu k} \, x_i )
\ a(x_0, x_l) \, c(x_i^2 + x_k^2)$, where 
$$
r \mapsto c(r^2) \doteq |r|^{-1} \! \int_0^1 \! dt \, \cos(2 \pi \, t) 
\, b(r^2 - 2r \cos(2\pi  \, t) + 1)
\, .
$$
Note that the latter function has support in small neighbourhoods  
of $r = \pm 1$
since~$b$ has support in a small neighbourhood of the origin.
Thus by choosing for $a, b$ test functions of
definite sign which have their supports in small 
connected regions, the supports of the resulting functions 
$x \mapsto g^{( i, k )}(x)$ are contained in loop-shaped regions 
(solid tori). 
Moreover, these functions have co-primitives of the simple form 
$$
x \mapsto G^{(ik)}_{\mu \nu} (x) = 2^{-1} (\delta_{\mu k} \delta_{\nu i} -
\delta_{\nu k} \delta_{\mu i}) \, a(x_0, x_l) \, 
C(x_i^2 + x_k^2) \, ,
$$
where $r \mapsto C(r^2) \doteq - \int_{r^2}^\infty \! du \, c(u) $ 
is a test function of compact support which is constant for small
argument.   

Whereas these (suitably shifted) co-primitives give rise 
to non-trivial values of the Roberts term in relation \eqref{commutator}, 
there arises a problem with its pre-factor. Since the tensors
$G^{(i k)}$ have only spatial components, \ie are of magnetic type, 
one has $\overline{G^{(i k)}}^2 < 0$ and the pre-factor vanishes. 
This problem can be solved by proceeding from the loop functions 
$g^{(ik)}$ to the family of functions $(g^{(i k)} + \eta \, g^{(0 l)})$,
$\eta \in \RR$. Here the $g^{(0 l)} \in \Cc_1(\RR^4)$ may be chosen
according to 
$$
x \mapsto g^{(0 l)}_\mu (x) = 
(\delta_{\mu 0} \, \partial_l - \delta_{\mu l} \, \partial_0) \, 
a(x_0, x_l) c(x_i^2 + c_k^2) \, .
$$
Then the functions $(g^{(i k)} + \eta \, g^{(0 l)})$, $\eta \in \RR$, 
have the same support
as the loop functions. The co-primitives of $g^{(0 l)}$ are of electric 
type and are given by
$$
x \mapsto G^{(0 l)}_{\mu \nu} = 
(\delta_{\mu 0} \delta_{\nu l} - \delta_{\nu 0} \delta_{\mu l}) 
a(x_0, x_l) c(x_i^2 + c_k^2) \, . 
$$
So one obtains 
$\overline{(G^{(i k)} + \eta G^{(0 l)})}^2 =  \overline{G^{(i k)}}^2
+ \eta^2 \overline{G^{(0 l)}}^2$ which is positive for sufficiently large
$\eta$. Modifying one of the loop functions in 
the Roberts term in this manner, one does not change its value since 
the co-primitives of electric type are mapped by the Hodge 
operator $\star$
to functions of magnetic type whose commutators vanish. 

For the proof that the commutator of the topological 
vector potential does not vanish,
we pick the modified loop function $(g^{(1 2)} + \eta \, g^{(0 3)})$
and the loop function $g^{(1 3)}_{\, e}$, where the latter function is translated 
into the $1$-direction by the unit vector $e$. So the supports of the 
resulting functions are intertwined, forming a Hopf link. Since 
$\overline{G^{(0 3)}}^2 < 0$ and, 
for suitable $\eta$, \ $\overline{(G^{(1 2)} + \eta \, G^{(0 3)})}^2 > 0$, 
it follows that \ 
$$
[A_T(g^{(1 2)} + \eta \, g^{(0 3)}), \, A_T(g^{(1 3)}_{\, e}) ]
= \Delta(G^{(1 2)}, \star \, G^{(1 3)}_{\, e}) \, 1 \, .
$$
Inserting the functions given above and 
disregarding numerical factors, one gets  
\begin{align*}
& \hspace*{3mm} \Delta(G^{(1 2)}, \star \, G^{(1 3)}_{\, e}) & \\
& \propto \int \! dp \, \varepsilon(p_0) \delta(p^2) \,
\big( \widehat{a}(-p_0,-p_3) \, \partial_1 \, 
\widehat{c} (p_1^2 + p_2^2) \big)
\ \big( p_0 \, \widehat{a}(p_0, p_2) \, 
\widehat{C}(p_1^2 + p_3^2) \big) \, e^{-i p e}  \, . &
\end{align*}
We make use now of the freedom to choose the function $a$ within the above 
limitations and proceed to the limit case where $a$ is the Dirac 
$\delta$ function. Then $\widehat{a}$ is constant and one can perform
the $p_0$-integration, turning the above integral into 
$$
\int \! d^3 \bip \, \big( \partial_1 \, \widehat{c} (p_1^2 + p_2^2) \big) \
\widehat{C}(p_1^2 + p_3^2) \, e^{-i p_1 e_1}
\propto \int \! dx_1 \, x_1 c(x_1^2) \ C((x_1 - e_1))^2 \, .
$$
Now the function $x_1 \mapsto 2 x_1 c(x_1^2)$ has support in an  
$\epsilon$-neighbourhood of $\pm 1$, 
whereas its primitive $x_1 \mapsto C(x_1^2)$
has a constant, non-zero value in the interval 
$| x_1 | \leq (1- \epsilon)$ and vanishes for 
$| x_1 | \geq (1+ \epsilon)$. Hence, due to the translation
$e_1$,  only positive values of~$x_1$ contribute to the 
above integral which consequently does not vanish. 
This shows that the Roberts term is different from
zero in the limit where $a$ is the Dirac delta function.  
In view of the continuity properties of the 
defining integral this is therefore also true for 
suitable approximating test functions $a$. Thus we have established 
the existence of topological charges in this model.

\begin{proposition}
Let $\omega_T$ be the regular vacuum state
on the universal algebra~$\fV$ which is fixed by the generating
function \eqref{generating}.
There exist functions $g_1, g_2 \in \Cc_1(\RR^4)$
with $\mathrm{supp}\, g_1 \perp \mathrm{supp} \, g_2$
such that the central group-theoretic commutator in condition~\eqref{a3} 
has values different from $1$ in the GNS 
representation induced by $\omega_T$. 
Thus there appear non-trivial topological charges in this
representation. 
\end{proposition}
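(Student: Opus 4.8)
The plan is to rephrase the operator identities assembled above in terms of the central group-theoretic commutator of \eqref{a3}. I would begin by noting that in the regular GNS representation one has $\pi_T(V(a,g)) = e^{\,i a A_T(g)}$ on the stable core $\Dom_0$, and that, by the c-number commutation relations inherited from the free field $A_0$, every commutator $[A_T(g_1),A_T(g_2)]$ is a central element of the form $i\,\sigma\,1$ with $\sigma\in\RR$. For exponentials of operators with central commutator the Weyl relations give $\lfloor \pi_T(V(a_1,g_1)),\pi_T(V(a_2,g_2))\rfloor = e^{-a_1 a_2[A_T(g_1),A_T(g_2)]} = e^{-i a_1 a_2\sigma}\,1$, so for suitable parameters $a_1,a_2$ this central commutator differs from $1$ precisely when $[A_T(g_1),A_T(g_2)]\neq 0$. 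The statement thus reduces to producing one pair $g_1,g_2\in\Cc_1(\RR^4)$ with $\mathrm{supp}\,g_1\perp\mathrm{supp}\,g_2$ for which the operator commutator does not vanish.

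For that pair I would take the functions already constructed, $g_1\doteq g^{(12)}+\eta\,g^{(03)}$ and $g_2\doteq g^{(13)}_{\,e}$, with the translation $e$ arranged so that the two solid tori carrying their supports are spacelike separated and form a Hopf link; in particular both supports are connected, so formula \eqref{commutator} applies. Inserting the signs of the quadratic forms $\overline{G}^2$ attached to the relevant co-primitives, the $\theta_\pm$ pre-factors collapse the right-hand side of \eqref{commutator} to the single Roberts term $[A_T(g_1),A_T(g_2)]=\Delta(G^{(12)},\star\,G^{(13)}_{\,e})\,1$. It then remains to check that this term is nonzero, which I would do by passing to the limit in which the smearing function $a$ becomes a Dirac $\delta$: the $p_0$-integration can then be carried out, and the residual one-dimensional integral is nonzero because the translation $e_1$ forces only one sign of $x_1$ to contribute. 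By continuity of the defining integral the same holds for suitable approximating test functions $a$, and the charge is non-trivial.

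The step I expect to be the main obstacle is the correct activation of the $\theta_\pm$ pre-factor, which is where the non-linearity of $A_T$ must genuinely be used. If both $g_1,g_2$ were taken to be purely magnetic loop functions they would lie in a single branch of \eqref{newpotential}, on which $A_T$ acts linearly; the Proposition of Section \ref{no-go} would then force the commutator of spacelike linked loops to vanish. Non-triviality therefore requires straddling the two branches, one magnetic and one of opposite signature, and this is exactly the purpose of the electric admixture $\eta\,g^{(03)}$: it flips the sign of $\overline{G_1}^2$ so that the cross pre-factor $\theta_+(\overline{G_1}^2)\theta_-(\overline{G_2}^2)$ switches on, while it leaves the value of the Roberts term unchanged, since the extra contribution $\eta\,\Delta(G^{(03)},\star\,G^{(13)}_{\,e})=\eta\,\Delta(G^{(13)}_{\,e},\star\,G^{(03)})$ again couples two magnetic-type functions in a single branch and hence vanishes by the Section \ref{no-go} result. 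Reconciling these sign conditions with the linking, the connectivity of the supports, and the non-vanishing of the residual integral simultaneously is the delicate point; once achieved, the passage back to a central commutator different from $1$ is immediate.
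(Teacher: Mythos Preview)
Your proposal is correct and follows the paper's approach exactly: the same pair $g_1=g^{(12)}+\eta\,g^{(03)}$, $g_2=g^{(13)}_{\,e}$, the same reduction of \eqref{commutator} to the single Roberts term $\Delta(G^{(12)},\star\,G^{(13)}_{\,e})$ via the $\theta_\pm$ pre-factors, and the same Dirac-$\delta$ limit computation followed by continuity to exhibit its non-vanishing. Your only additions are making the passage from operator commutator to group-theoretic commutator explicit via the Weyl relations, and invoking Section~\ref{no-go} (applied to the bilinear free-field form $\sigma_{\pi_0}$, with entries $g^{(13)}_{\,e}$ and $\delta\!\star\! G^{(03)}$ both loop-supported) to justify why the extra $\eta\,\Delta(G^{(03)},\star\,G^{(13)}_{\,e})$ term drops out---a point the paper handles with the terser remark that electric co-primitives are mapped by $\star$ to magnetic ones ``whose commutators vanish''.
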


We conclude this section by coming back to a remark made in the 
proof of Lemma~\ref{l3.1}. There it was stated that 
loop functions with support in loop-shaped regions, being co-closed, are 
in general not co-exact in these regions. This is evident now
from the preceding computations. For if 
the loop functions $g^{(i k)}$, having support
in some solid torus,  would have 
co-primitives $G^{(i k)}$ which have their support 
in the same region, the Roberts
term $\Delta(G^{(1 2)}, \star \, G^{(1 3)}_{\, e})$   
would vanish. This follows from the fact that 
the Hodge operator $\star$ does not
change supports and the commutator function~$\Delta$ vanishes
on test functions in $\Dc_2(\RR^4)$ with spacelike
separated supports. But this would be in conflict with the
results of the preceding computations.

\section{Electric currents and spacelike linear vector potentials
\label{current}}
\setcounter{equation}{0}

In the preceding section we have exhibited a theory with non-trivial 
topological charge, but trivial electric current. We will show now that
there exist topological charges in presence of given electric 
currents. This result may look surprising since, as is well known, it is 
in general not possible to couple any given current via field equations to 
a Wightman field, cf.~\cite{ArHaSch}. But, as we have shown, 
in the presence of topological charges 
the vector potentials can at best be spacelike linear, 
they are never Wightman fields. It turns out that 
within this larger class of fields, the inhomogeneous Maxwell 
equation does have solutions for almost any given 
current. Combining them with the
field of the preceding section, one obtains vector potentials carrying a 
topological charge also in the presence of electric currents.  

Our starting point is a conserved current $j(h)$, $h \in \Dc_1(\RR^4)$,
satisfying all Wightman axioms~\cite{StWi}. In particular, it is 
(real) linear on
test functions, local, and it transforms covariantly under 
a continuous unitary representation $U_J$ of the Poincar\'e group 
$\Poin_+^\uparrow$, satisfying the relativistic spectrum condition. 
We also assume that the current operators act on a common
stable core $\Dom_J \subset \Hil_J$ which contains the vacuum vector
$\Omega_J$ and which is stable under the action of 
the unitary operators $e^{i j(h)}$, $h \in \Dc_1(\RR^4)$. 
Examples of such currents are familiar from theories of 
charged free fields. 

It is our aim to show that there exists a spacelike linear vector 
potential $A_J$ which arises from a regular vacuum representation of 
the universal algebra $\fV$ on the Hilbert space $\Hil_J$ 
and which satisfies on the domain $\Dom_J$ the inhomogeneous
Maxwell equation $A_J(\delta d h) = j(h)$, $h \in \Dc_1(\RR^4)$. 
Note that $\delta d \Dc_1(\RR^4) \subset \Cc_1(\RR^4)$; so,
for given $j$, the potential 
$A_J$ is already defined by this equation on this subspace.
Thus we are faced with the problem to extend it consistently to all of 
$\Cc_1(\RR^4)$ and, less trivial, 
to determine its localisation properties from those of the given $j$. 
This is accomplished with the help of the following lemma.

\begin{lemma} \label{l5.1}
Let $g = \delta d h \in \delta d \Dc_1(\RR^4)$. 
(i) The pre-image $h \in \Dc_1(\RR^4)$ of $g$ is uniquely 
determined by g up to elements of \ $d \Dc_0(\RR^4)$.  
(ii) Let $\Region \supset \mathrm{supp} \, g$
be such that the complement of 
$\widetilde{\Region} \doteq (\Region + V_+) \bigcap (\Region + V_-)$
has trivial first homology, 
$H_1(\RR^4 \backslash \widetilde{\Region}) = \{ 0 \}$. 
There exist pre-images $h \in \Dc_1(\RR^4)$ of $g$ having support in 
any given neighbourhood of~$\widetilde{\Region}$. 
(iii) The map from $\delta d \Dc_1(\RR^4)$   
into the classes $\Dc_1(\RR^4) /d \Dc_0(\RR^4)$ of 
pre-images is continuous.
\end{lemma}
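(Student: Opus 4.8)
The plan is to treat the three parts in turn, the crux being part~(ii), where I would pass from the potential $h$ to the field strength $f \doteq dh$ and exploit that $g$ lies in the image of $\delta d$. Throughout I write $\Box \doteq d\delta + \delta d$ for the wave operator, which in proper coordinates acts componentwise, and let $E^{\mathrm{ret}}, E^{\mathrm{adv}}$ be the retarded and advanced fundamental solutions of $\Box$, with $E \doteq E^{\mathrm{ret}} - E^{\mathrm{adv}}$ the Pauli--Jordan distribution.

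For part~(i), suppose $h, h' \in \Dc_1(\RR^4)$ satisfy $\delta d h = \delta d h'$ and set $w \doteq h - h'$, so $\delta d w = 0$. Applying $d$ and using $d^2 = 0$, I obtain $\Box(dw) = d\delta(dw) = d\,\delta d\, w = 0$. Since $dw \in \Dc_2(\RR^4)$ is compactly supported and $\Box$ has no nonzero compactly supported solutions (its Fourier transform would be an entire function vanishing off the light cone, hence identically zero), I conclude $dw = 0$. As $H^1_c(\RR^4) = 0$, the closed compactly supported $1$-form $w$ is then exact, $w = d\phi$ with $\phi \in \Dc_0(\RR^4)$, which is the claim.

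For part~(ii) the key observation is that, although $\Box h = g$ has in general no compactly supported solution, the equation for the field strength does. Given $g = \delta d h_0$ with $h_0 \in \Dc_1(\RR^4)$, I set $f \doteq E^{\mathrm{ret}} * dg$. Then $df = E^{\mathrm{ret}} * d^2 g = 0$, and since $\delta g = \delta \delta d h_0 = 0$ gives $\delta dg = \Box g$, one gets $\delta f = E^{\mathrm{ret}} * \Box g = g$. The essential point is the support: because $dg = d\,\delta d\, h_0$ and $E$ commutes with $d,\delta$, one has $E * dg = d\,\delta d\,(E * h_0) = 0$, the last equality holding because $u \doteq E * h_0$ is smooth with $\Box u = 0$, whence $d\,\delta d\, u = d(\Box u - d\delta u) = 0$. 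Thus $f = E^{\mathrm{ret}} * dg = E^{\mathrm{adv}} * dg$, so $\mathrm{supp}\, f \subseteq (\mathrm{supp}\, g + V_+) \cap (\mathrm{supp}\, g + V_-) \subseteq \widetilde{\Region}$, and $f \in \Dc_2(\RR^4)$ is a closed $2$-form supported in $\widetilde{\Region}$ with $\delta f = g$. It then remains to write $f = dh$ with $h \in \Dc_1$ supported in a prescribed neighbourhood $U$ of $\widetilde{\Region}$. Choosing $U$ to deformation retract onto $\widetilde{\Region}$, the class $[f] \in H^2_c(U)$ is the only obstruction; by Poincar\'e duality on the oriented $4$-manifold $U$ one has $H^2_c(U) \approx H^2(U)^* \approx H^2(\widetilde{\Region})^*$, and by Alexander duality $H^2(\widetilde{\Region}) \approx H_1(\RR^4 \setminus \widetilde{\Region}) = 0$ by hypothesis. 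Hence $[f] = 0$, $f = dh$ with $h \in \Dc_1(U)$, and $\delta d h = \delta f = g$. I expect the main obstacle to be exactly this interplay: recognizing that the causal, two-sided support bound on $f$ comes for free from $g \in \mathrm{im}(\delta d)$ via $E * dg = 0$, and then matching the de~Rham obstruction $H^2_c(U)$ to the stated homological condition on the complement.

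For part~(iii), the map $g \mapsto [h] \in \Dc_1(\RR^4)/d\Dc_0(\RR^4)$ is well defined by~(i), and I would establish its continuity by the closed graph theorem: both spaces carry their natural (LF) topologies and are webbed, while the inverse assignment is induced by the continuous operator $\delta d$, so the graph of $g \mapsto [h]$ is closed and De~Wilde's theorem applies. Alternatively, restricting to functions $g$ supported in a fixed compact set, one may exhibit an explicit continuous linear lift $g \mapsto h$ by composing the retarded Green's operator $g \mapsto E^{\mathrm{ret}} * dg$ of the previous paragraph with a fixed continuous homotopy operator realising the vanishing of $H^2_c(U)$, and then project to the quotient; this gives continuity directly.
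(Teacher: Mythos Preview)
Your argument is correct in all three parts. Part~(i) coincides with the paper's. Parts~(ii) and~(iii) are sound but proceed differently from the paper.

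For~(ii), both routes reach the same intermediate point: the two-form $f \doteq dh_0 = E^{\mathrm{ret}} * dg = E^{\mathrm{adv}} * dg$ has support in $\widetilde{\Region}$. From here the paper stays on the \emph{complement}: since $dh_0 \upharpoonright \RR^4 \setminus \widetilde{\Region} = 0$, the restriction of $h_0$ is a closed $1$-form there, and the hypothesis $H_1(\RR^4 \setminus \widetilde{\Region}) = 0$ (hence vanishing first de~Rham cohomology) yields directly $h_0 = ds$ on the complement for some smooth scalar~$s$; one then sets $\widetilde{h} \doteq h_0 - d\big((1-c)s\big)$ with a suitable cutoff~$c$. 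You instead work on a \emph{neighbourhood}~$U$ of $\widetilde{\Region}$, recasting the problem as $[f] \in H^2_c(U)$ and connecting this to the stated hypothesis through Poincar\'e duality $H^2_c(U) \approx H^2(U)^*$ followed by Alexander duality $H^2(\widetilde{\Region}) \approx H_1(\RR^4 \setminus \widetilde{\Region})$. This is valid, but the duality chain is longer and the Alexander step tacitly needs mild regularity of $\widetilde{\Region}$ (compact closure, ANR-type conditions, matching of $\RR^4 \setminus \widetilde{\Region}$ with $\RR^4 \setminus \overline{\widetilde{\Region}}$), all of which the paper sidesteps by invoking only the first de~Rham cohomology of the complement, which is immediately controlled by the assumed~$H_1$.

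For~(iii), the paper exhibits an explicit continuous lift: for $g$ supported in a fixed double cone $\Region_0$ one puts
\[
h_\pm(x) \doteq E_\pm g(x) - d\Big( (1 - c(x)) \int_{x_a}^x dy_\mu\, (E_\pm g)^\mu(y) \Big),
\]
with $c$ a smooth cutoff, and reads off continuity in~$g$ directly. Your closed-graph argument (or the abstract homotopy operator) is legitimate, but less constructive; the paper's explicit formula has the advantage of being reusable and of making the support control manifest.

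In short: your approach foregrounds the cohomological obstruction $H^2_c$ and reaches the goal through duality; the paper's is more elementary, using only that a closed $1$-form on a region with trivial $H_1$ is exact, followed by an explicit cutoff construction.
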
 
\begin{proof}
(i) Let $h_1, h_2 \in \Dc_1(\RR^4)$ such that 
$ \delta d h_1 = g = \delta dh_2$. Then 
$$
\square \, d(h_1 - h_2) = (d \delta + \delta d) d (h_1 - h_2) = 0 \, ,
$$
\ie $d(h_1 - h_2)$ is a solution of the wave equation with compact
support and hence vanishes everywhere, $d(h_1 - h_2) = 0$. Thus, by the 
compact Poincar\'e lemma, there exists a scalar test function 
$s \in \Dc_0(\RR^4)$ such that $h_1 - h_2 = d s$. \

(ii) By assumption,    
$\square \, d h = (\delta d + d \delta) d h = d g$ and
consequently $d h = E_\pm d g + f_\pm$, where 
$E_\pm$ are the retarded and advanced Green's functions  
and $f_\pm$ are solutions of the 
wave equation. Since $g, h$ have compact support,
the support of $E_\pm \, dg$ is contained in future, respectively past
directed lightcones and this is therefore also true for 
$f_\pm$. Since the latter functions are solutions of the
wave equation, it follows that \mbox{$f_\pm = 0$}. 
Thus $dh =  E_\pm \, d g$ and consequently 
$ \mbox{supp} \ dh \subset \{\mbox{supp} \, g + V_+ \} \bigcap
\{\mbox{supp} \, g + V_- \} \subset \widetilde{\Region}$. 
Since $d h \upharpoonright \RR^4 \backslash \widetilde{\Region} = 0$
and $H_1(\RR^4 \backslash \widetilde{\Region}) = \{ 0 \}$, there 
is some smooth scalar function $s$ on that region, such that 
$(h - d s) \upharpoonright \RR^4 \backslash \widetilde{\Region} = 0$.
We choose now a smooth characteristic function $c$ which 
is equal to $1$ on $\widetilde{\Region}$ and vanishes outside 
some given neighbourhood of  $\widetilde{\Region}$. 
The function 
$\widetilde{h} \doteq (h - d (1-c) \, s)$ is then 
an element of $\Dc_1(\RR^4)$ which has support in this
neighbourhood and satisfies $d \widetilde{h} = g$, as claimed. 

(iii) Let $\Region_0 \subset \RR^4$ be any open double cone and let 
$g = \delta d h \in \delta d \Dc_1(\RR^4)$ have support 
in the interior of $\Region_0$. 
Since $\RR^4 \backslash \Region_0$ is 
simply connected and consequently  
$H_1(\RR^4 \backslash \Region_0) = \{ 0 \}$, we may assume 
according to the preceding step 
that $h$ has its support in  $\Region_0$ as well. Moreover, 
$d h = E_\pm dg = d E_\pm g$, hence 
$d E_\pm g \upharpoonright \RR^4 \backslash \Region_0 = 0$.  
The solutions of this equation are the
path integrals $x \mapsto \int_{x_a}^x \! dy_\mu (E_\pm g)^\mu (y)$
along any path from some
fixed initial point $x_a$ to $x$ within 
the region $\RR^4 \backslash \Region_0 $.
Picking a double cone $\Region$ containing $\Region_0$ in its interior
and a smooth characteristic function $c$ which is equal to $1$
on $\Region_0$ and vanishes in the complement of $\Region$,     
we put 
$$
x \mapsto h_\pm(x) \doteq 
E_\pm g(x) - d \, (1 - c(x)) \int_{x_a}^x \! dy_\mu (E_\pm g)^\mu (y) \, .
$$ 
These functions are elements of $\Dc_1(\RR^4)$, 
have support in $\Region$, and satisfy \mbox{$\delta d h_\pm = g$}.
Moreover, for fixed index $+$ or $-$, 
they depend continuously on the test 
functions $g \in \delta d \Dc_1(\RR^4)$ 
having support in $\Region_0$. Since the choice of 
regions was arbitrary, this completes  the proof. 
\end{proof}

We are equipped now with the tools to extend the vector 
potential $A_J$ to $\Cc_1(\RR^4)$. As explained above,
$A_J$ is already defined on the subspace $\delta d \Dc_1(\RR^4)$.
Elements of $\Cc_1(\RR^4)$ which belong to this subspace
will be marked  by an index $\smile$, that is $g_\smile \in  \Cc_1(\RR^4)$
means that there is some $h_\smile \in \Dc_1(\RR^4)$ such that 
$g_\smile = \delta d h_\smile$. 

For the desired extension of $A_J$ to $\Cc_1(\RR^4)$, we begin by considering 
the functions $g \in \Cc_1(\RR^4)$ whose support is connected.   
If $g = g_\smile$ we put $A_J(g) \doteq j(h_\smile)$,
otherwise we put $A_J(g) = 0$. This definition is consistent
according to part (i) of the preceding lemma 
since $j(d s) = 0$, $s \in \Dc_0(\RR^4)$, due to current conservation.  
Next, given any \mbox{$g \in \Cc_1(\RR^4)$}, we decompose it into functions
having disjoint connected supports, \mbox{$g = \sum_n g_n$}.
We then collect from this sum all terms $g_n$ which are 
of type $\smile$, giving $g_\smile = \sum_{\, n^\prime} g_{\, n^\prime \smile}$.
Denoting the remainder by $g_\frown$, 
this produces for any $g \in \Cc_1(\RR^4)$ a unique splitting 
$g = g_\smile + g_\frown$.
After these preparations we can extend the definition of 
the vector potential to
arbitrary test functions, putting on the dense domain $\Dom_J$
\begin{equation} \label{currentpotential}
A_J(g) \doteq A_J(\sum_{n^{\prime}} g_{n^\prime \smile} ) 
= j (\sum_{n^{\prime}} h_{n^\prime \smile} ) \, , 
\quad g \in \Cc_1(\RR^4) \, . 
\end{equation}
This definition is meaningful as it stands for test functions $g$ which can
be split into finite sums of functions with disjoint connected supports. 
Since $j$ is a Wightman field it can then be extended by continuity
to arbitrary test functions, relying on part (iii) of the
preceding lemma. With this construction, we can then write 
in an obvious notation $A_J(g) = A_J(g_\smile) = j(h_\smile)$,  
$g \in \Cc_1(\RR^4)$. In particular, $A_J(g_\frown) = 0$. 

Having defined the vector potential, let us turn now to the discussion 
of its properties. Since $A_J(g) = j(h_\smile)$ and
the domain $\Dom_J$ is a common core for the currents $j(h)$ 
which is stable under the action of the corresponding unitaries, 
$h \in \Dc_1(\RR^4)$, this is also true for $A_J(g)$ and the
unitaries $e^{ia A_J(g)}$, where $a \in \RR$, $g \in \Cc_1(\RR^4)$.
Thus condition~\eqref{a1} is satisfied. 

For the proof of 
condition \eqref{a2}, let $g_1, g_2 \in \Cc_1(\RR^4)$ such that
$\mbox{supp} \, g_1 \Times \mbox{supp} \, g_2$,
\ie $g_1, g_2$ have support in spacelike separated double cones. 
Then their $\smile$ contributions $g_{1 \smile}, g_{2 \smile}$ 
share this property 
and in view of the second part of Lemma \ref{l5.1} there exist corresponding 
pre-images $h_{1 \smile}, h_{2 \smile} \in \Dc_1(\RR^4)$ 
such that $\mbox{supp} \, h_{1 \smile} \Times \mbox{supp} \, h_{2 \smile}$. 
Since the current is 
linear in the test functions and local it follows that 
$$
e^{\mbox{\footnotesize $i a_1 A_J(g_1)$}} \, 
e^{\mbox{\footnotesize $i a_2 A_J(g_2)$}} = 
e^{\mbox{\footnotesize $i a_1 j(h_{1 \smile})$}} \, 
e^{\mbox{\footnotesize $i a_2 j(h_{2 \smile})$}} 
= e^{\mbox{\footnotesize $i \! j(a_1 h_{1 \smile} + a_2 h_{2 \smile} )$}} =
e^{\mbox{\footnotesize $i A_J(a_1 g_1 + a_2 g_2 )$}} \, ,
$$
in accordance with condition \eqref{a2}. 

The proof of condition~\eqref{a3} requires more work because of the 
intricate relation between the support of the elements of 
$\delta d \Dc_1(\RR^4)$ and of their pre-images in $\Dc_1(\RR^4)$. 
We begin with some technical lemma.

\begin{lemma}
Let $h_1, h_2 \in \Dc_1(\RR^4)$ such that \  
$\mathrm{supp} \, d \, h_1 \perp \mathrm{supp} \, h_2$.
Then one has \mbox{$[j(h_1), j(h_2)] = 0$}.
\end{lemma}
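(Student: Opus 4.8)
The plan is to exploit the gauge invariance that current conservation confers on $j$: since $j(ds)=0$ for every scalar $s\in\Dc_0(\RR^4)$, the operator $j(h_1)$ depends on $h_1$ only through its class in $\Dc_1(\RR^4)/d\,\Dc_0(\RR^4)$, and the gauge-invariant datum that it actually ``sees'' is the two-form $w\doteq d\,h_1\in\Dc_2(\RR^4)$, whose support is, by hypothesis, spacelike separated from $K_2\doteq\mathrm{supp}\, h_2$. The idea is therefore to replace $h_1$ by a cohomologous test function $h_1'$ whose \emph{own} support is spacelike separated from $K_2$, and then to invoke locality of the Wightman current. So I would first set up $w=d\,h_1$ and record that $\mathrm{supp}\, w\perp K_2$.

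Next I would produce a second primitive $h_1'\in\Dc_1(\RR^4)$ of $w$, i.e.\ $d\,h_1'=d\,h_1$, whose support lies in the spacelike complement of $K_2$. Existence of \emph{some} compactly supported primitive is immediate ($h_1$ itself), so the only issue is to localise it, and this is precisely the content of the causal Poincar\'e lemma established in the appendix of \cite{BuCiRuVa}: an exact form whose support is spacelike separated from a region admits a primitive supported in the spacelike complement of that region. Since that lemma is phrased for the co-derivative $\delta$ and co-primitives, I would transcribe the equation $d\,h_1'=w$ into its Hodge-dual form $\delta(\star h_1')=-\star w$ for the three-form $\star h_1'$; the dual datum $-\star w$ has support $\mathrm{supp}\, w\perp K_2$ and is co-exact (witnessed by the co-primitive $\star h_1\in\Dc_3(\RR^4)$), so the lemma applies with region $K_2$ and returns a co-primitive $u$ supported spacelike to $K_2$, whence $h_1'\doteq\star u$ satisfies $d\,h_1'=w$ with $\mathrm{supp}\, h_1'=\mathrm{supp}\, u\perp K_2$.

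It then remains to compare $h_1$ and $h_1'$. Their difference is a closed one-form of compact support, $d(h_1-h_1')=0$, so by the compact Poincar\'e lemma (here $H_c^1(\RR^4)=\{0\}$, exactly as already used in the proof of Lemma~\ref{l5.1}(i)) there is some $s\in\Dc_0(\RR^4)$ with $h_1-h_1'=d\,s$. Current conservation and linearity give $j(h_1)=j(h_1')+j(d\,s)=j(h_1')$. Finally, since $\mathrm{supp}\, h_1'$ is spacelike separated from $\mathrm{supp}\, h_2$, locality of the Wightman current yields $[j(h_1'),j(h_2)]=0$, and hence $[j(h_1),j(h_2)]=0$, as claimed.

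The main obstacle is the middle step: securing a primitive of $d\,h_1$ whose support is confined to the spacelike complement of $K_2$, rather than merely to a neighbourhood of $\mathrm{supp}\, h_1$. This is where the causal Poincar\'e lemma does the real work; the care required is only to match the degrees of the forms appearing in its $\delta$-formulation to our $d$-problem, which the Hodge transcription above effects, and to check that the support bookkeeping the lemma produces indeed keeps $h_1'$ spacelike to $K_2$. The remaining ingredients---the gauge reduction via conservation and the appeal to locality---are routine.
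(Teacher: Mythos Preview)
Your overall plan is the same as the paper's: replace $h_1$ by a gauge-equivalent $\widetilde{h}_1$ (so $h_1-\widetilde{h}_1=ds$ and $j(h_1)=j(\widetilde{h}_1)$ by current conservation) whose support is spacelike to $\mathrm{supp}\,h_2$, and then invoke locality. The difficulty is entirely in producing such a $\widetilde{h}_1$, and this is exactly where your argument breaks down.

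The causal Poincar\'e lemma of \cite{BuCiRuVa}, as stated and used in this paper (cf.\ the proof of Lemma~\ref{l3.2}), furnishes a co-primitive in $\Dc_2(\RR^4)$ for a co-closed element of $\Cc_1(\RR^4)$, with support in the spacelike complement of a double cone. Under Hodge duality this is the statement that a \emph{closed $3$-form} with compact support in $\Region^\perp$ has a compactly supported primitive there. You need the corresponding statement for a \emph{closed $2$-form} $w=dh_1$, i.e.\ a co-primitive in $\Dc_3$ of a co-closed $2$-form, and that statement is simply false for the spacelike complement of a double cone: $\Region^\perp$ is homotopy equivalent to $S^2$, and Poincar\'e duality gives $H_c^2(\Region^\perp)\cong H_2(S^2)\cong\RR\neq\{0\}$. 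So there are closed compactly supported $2$-forms in $\Region^\perp$ (all of them exact in $\RR^4$, since $H_c^2(\RR^4)=0$) that admit no compactly supported primitive there. Your appeal to the causal Poincar\'e lemma, even after the Hodge transcription, does not match the degree in which the lemma is available, and the global co-exactness witness $\star h_1$ you record does not help: the obstruction lives in $H_c^2(\Region^\perp)$, not in $H_c^2(\RR^4)$.

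The paper gets around this by a genuinely additional topological step. After reducing by a partition of unity to $\mathrm{supp}\,h_2\subset\Region_2$ a double cone, it does \emph{not} work in the unbounded region $\Region_2^\perp$, but instead encloses $\mathrm{supp}\,dh_1$ in a \emph{bounded} region $\Region_1\cap\Region_2^\perp$ for a large double cone $\Region_1\supset\Region_2$. This region is homeomorphic to $S^1\times B^3$, and a K\"unneth computation gives $H_c^2(S^1\times B^3)=\{0\}$. Only then can one conclude that $dh_1$ has a compactly supported primitive $\widetilde{h}_1$ there. This passage to a bounded annular region with the right cohomology is the missing idea in your proposal; without it the middle step does not go through.
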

\begin{proof}
First, we assume that $h_2$ has support in some open double cone 
$\Region_2 \perp \mbox{supp} \, d \, h_1$. Since $\mbox{supp} \, d \, h_1$
is compact there exists another open double cone 
$\Region_1 \supset \Region_2$ such that 
$\mbox{supp} \, d \, h_1 \subset \Region_1 \bigcap \Region_2^\perp$,
where $\Region^\perp$ denotes the spacelike complement of 
$\Region$. We need to show that there is some function 
$\widetilde{h}_1 \in \Dc_1(\RR^4)$ with support 
in $\Region_1 \bigcap \Region_2^\perp$ such that $d(h_1 - \widetilde{h}_1) = 0$.
Since $d h_1$ is a closed two-form, this is a problem
in second cohomology with compact support 
for this region \cite{BoTu}, \ie we must show that 
the cohomology group $H_c^2(\Region_1 \bigcap \Region_2^\perp)$ is trivial. 
Now $\Region_1 \bigcap \Region_2^\perp$ is homeomorphic to the 
region $S^1 \times B^3$, where $B^3$ is a three-dimensional open ball,
and consequently 
$H_c^2(\Region_1 \bigcap \Region_2^\perp)  \approx H_c^2(S^1 \times B^3)$.
Next, by the K\"unneth formula for compact cohomology 
\cite[Sec.\ 5.19]{GrHaVa}, one has for integers $i,j \geq 0$
$$
H_c^2(S^1 \times B^3) \approx \bigoplus_{i + j = 2} 
H_c^i(S^1) \otimes H_c^j(B^3) \, .  
$$
Now $H_c^p(S^1) = \{ 0 \}$ for $p \geq 2$ since 
all p-forms of degree greater than $1$ vanish on 
$S^1$. Moreover, $H_c^p(B^3) \approx H_c^p(\RR^3)$ since $B^3$ is homeomorphic 
to $\RR^3$, and $H_c^p(\RR^3)= \{ 0 \}$ if $p \neq 3$ \cite[p.~19]{BoTu}.
Hence 
$H_c^2(\Region_1 \bigcap \Region_2^\perp) \approx H_c^2(S^1 \times B^3) 
= \{ 0 \}$. 
This implies that there is some test function 
$\widetilde{h}_1 \in \Dc_1(\RR^4)$ with support in 
$\Region_1 \bigcap \Region_2^\perp$ such that $d(h_1 - \widetilde{h}_1) = 0$ on 
$\RR^4$. Thus, according to the compact Poincar\'e lemma, there is some 
scalar test function $s \in \Dc_0(\RR^4)$ such that 
$h_1 = \widetilde{h}_1 + d s$. Because of current conservation 
one has $j(ds) = 0$, hence, taking into account 
the support properties of $\widetilde{h}_1, h_2$ and the
fact that $j$ is a local Wightman field, one finally gets
$[j(h_1), j(h_2)] = [j(\widetilde{h}_1), j(h_2)] = 0$.
This proves the statement in the special case. 

Now let $h_2$ be an arbitrary test function such that 
$\mbox{supp} \, h_2 \perp \mbox{supp} \, d \, h_1$. 
By a partition of unity, we decompose $h_2$ into a finite 
sum $h_2 = \sum_m h_{2 m}$ of test functions \mbox{$h_{2 m} \in \Dc_1(\RR^4)$}
having support in double cones $\Region_{2,m} \perp \mbox{supp} \, h_1$,
$\, m = 1, \dots, n$. Since $j$ is linear, we obtain 
$[j(h_1), j(h_2)] = \sum_m [j(h_1), j(h_{2 m})] = 0$ where  
the second equality follows from the preceding step,
completing the proof. 
\end{proof}

Turning now to the proof that 
$A_J$ satisfies condition \eqref{a3}, let 
$g_1, g_2 \in \Cc_1(\RR^4)$ with \  
$\mbox{supp} \, g_1 \perp \mbox{supp} \, g_2$.
Since $A_J(g_\frown) = 0$ by definition, it suffices to consider the cases 
where $g_1, g_2$ are both of type $\smile$. One then gets  
$[A_J(g_1), A_J(g_2)] = [j(h_1), j(h_2)]$ for their respective    
pre-images $h_1, h_2 \in \Dc_1(\RR^4)$. As was shown in the proof
of part~(ii) of Lemma \ref{l5.1}, one has \ 
$\mbox{supp} \, d h_m \subset \Region_m \doteq (\mbox{supp} \, g_m + V_+) \bigcap
 (\mbox{supp} \, g_m + V_-)$, \ $m = 1,2$, and consequently 
$\Region_1 \perp \Region_2$. We define now for given $y \in \RR^4$ functions 
$k_{m \, , y} \in \Dc_1(\RR^4)$, $m = 1,2$, putting
$$
x \mapsto k_{m \, , y}^\mu(x) \doteq 
\int_0^1 \! dt \, (d h_m)^{\mu \nu}(x - ty) \, y_\nu 
= \int_0^1 \! dt \, 
(\partial^\mu h_m^\nu(x - ty) - \partial^\nu h_m^\mu(x - ty)) \,
y_\nu \, .
$$
The integral can be computed, giving
$$ 
k_{m \, , y}^\mu(x) = h_m^\mu(x) - h_m^\mu(x - y) + \partial^\mu 
\int_0^1 \! dt \, y_\nu \, h_m^\nu(x - ty) \, ,
\quad m =1,2 \, .
$$ 
Note that the last term is the gradient of a scalar test function
which consequently gives zero if plugged into the current $j$ because of 
current conservation. From the integral representation of the 
functions $k_{m \, , y}$ one sees that these functions have support in 
the regions $\, \widetilde{\Region}_{m y} \doteq 
\bigcup_{\, 0 \leq t \leq 1} (\Region_m + ty)$, $\, m = 1,2$.
So, for sufficiently small $y \in \RR^4$, these regions are still
spacelike separated. We compute now for such $y$ 
\begin{align*}
& [j(h_1), j(h_2)] = [j(h_{1 y} + k_{1 \, , y}), j(h_{2 y} + k_{2 \, , y})] & \\
& = [j(h_{1 y}), j(h_{2 y})] 
+ [j(h_{1 y}), j(k_{2 \, , y})] 
+ [j(k_{1 \, , y}), j(h_{2 y})]    
+ [j(k_{1 \, , y}), j(k_{2 \, , y})] \, . &
\end{align*}
Since $\widetilde{\Region}_{1 y} \perp \widetilde{\Region}_{2 y}$, the last 
term on the right hand side of this equality vanishes 
because $j$ is local. Noticing that 
$\mbox{supp} \, d h_{1 y} \subset \widetilde{\Region}_{1 y}$
and $\mbox{supp} \, d h_{2 y} \subset \widetilde{\Region}_{2 y}$,
the first and second to last terms also vanish by the preceding 
lemma. So we are left with the equality 
$[j(h_1), j(h_2)] = [j(h_{1 y}), j(h_{2 y})]$
for small translations; iterating this procedure 
we then see that it holds for all $y \in \RR^4$. 
Because of locality, the commutator is thus 
an element of the centre of the algebra generated by the current.
Since the subspace $\delta d \Dc_1(\RR^4)$ is stable
under translations we conclude that this holds 
also for $[A_J(g_1) , A_J(g_2)]$ if 
$\mbox{supp} \, g_1 \perp \mbox{supp} \, g_2$,
establishing condition \eqref{a3}. 

It remains to show that the potential $A_J$ arises from
some regular vacuum representation of the universal algebra. 
To prove this let $g \in \Cc_1(\RR^4)$ be of
type $\smile$ and let 
$P \in \Poin_+^\uparrow$ be any Poincar\'e transformation. 
Since $(\delta d h)_P = \delta d h_P$ for $h \in \Dc_1(\RR^4)$
and since the connectivity properties of supports 
are respected by Poincar\'e transformations,
one obtains $g_{\smile P} = g_{P \smile}$. 
It follows that $A_J$ transforms covariantly under the 
unitary representation $U_J$ of the Poincar\'e group for the given current, 
\begin{align*}
U_J(P) A_J(g) U_J(P)^{-1} & = U_J(P) A_J(g_\smile) U_J(P)^{-1} =  
U_J(P) j(h_\smile) U_J(P)^{-1} & \\
& =  j(h_{\smile P}) = A_J(g_{\smile P}) = A_J(g_{P \smile}) = A_J(g_P) \, . &
\end{align*}
Thus the vector $\Omega_J \in \Hil_J$ represents a vacuum state 
also for the vector potential~$A_J$. We therefore obtain on the underlying
Hilbert space $\Hil_J$ a vacuum representation $\pi_J$ of the 
universal algebra $\fV$, putting 
$$
\pi_J(V(a,g)) \doteq e^{\, ia A_J(g_\smile)} = e^{\, ia j(h_{\smile})} \, ,
\quad a \in \RR, \ g \in \Cc_1(\RR^4) \, .
$$
The  generating functional of the vacuum state $\omega_J$ 
on $\fV$ is given by 
$$
\omega_J(V(a,g)) \doteq \langle \Omega_J, e^{i a A_J(g_\smile)} 
\Omega_J \rangle
=  \langle \Omega_J, e^{i a j(h_\smile)} \Omega_J \rangle \, , 
\quad a \in \RR, \ g \in \Cc_1(\RR^4) \, . 
$$
We summarise these results in the following proposition.

\vspace*{1mm} 
\begin{proposition}
Let $j$ be any local, covariant and conserved current 
satisfying all Wightman axioms and having domain properties  
as described above. There exists a regular vacuum 
state $\omega_J$ on the universal algebra $\fV$ of the electromagnetic
field such that the vector potential 
$A_J$ in the resulting GNS-representation is spacelike linear 
and satisfies the inhomogeneous Maxwell equation for the 
given current.
\end{proposition}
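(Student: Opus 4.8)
The plan is to assemble the construction prepared above into a verification that the assignment $\pi_J(V(a,g)) \doteq e^{\,iaA_J(g)}$, with $A_J$ defined by \eqref{currentpotential}, descends to a representation of the universal algebra $\fV$, and that the cyclic vector $\Omega_J$ defines a regular vacuum state $\omega_J$ on it with the two asserted properties. First I would observe that the inhomogeneous Maxwell equation $A_J(\delta d h)=j(h)$ and spacelike linearity are not independent goals: the former is immediate from the definition, since any $g=\delta d h$ is of type $\smile$ with pre-image $h$ that is unique modulo $d\Dc_0(\RR^4)$ by part (i) of Lemma~\ref{l5.1}, on which $j$ vanishes by current conservation; the latter is precisely relation~\eqref{a2}.

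Next I would check the three defining relations for the unitaries $e^{iaA_J(g)}$, so that $\pi_J$ factors through $\fV$. Relation~\eqref{a1} follows because $A_J(g)=j(h_\smile)$ acts on the stable core $\Dom_J$ and inherits the one-parameter group structure of the currents. For \eqref{a2}, given $\mathrm{supp}\,g_1 \Times \mathrm{supp}\,g_2$ I would pass to the $\smile$-parts $g_{1\smile},g_{2\smile}$ and invoke part (ii) of Lemma~\ref{l5.1} to select pre-images $h_{1\smile},h_{2\smile}$ with $\mathrm{supp}\,h_{1\smile}\Times\mathrm{supp}\,h_{2\smile}$; linearity and locality of $j$ then yield additivity of $A_J$ on such pairs. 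For the vacuum property I would use that $j$ transforms covariantly under $U_J$ with invariant $\Omega_J$ and satisfies the spectrum condition, together with the compatibility $g_{\smile P}=g_{P\smile}$ (Poincar\'e transformations preserve both $\delta d$ and the connectivity of supports), giving $U_J(P)A_J(g)U_J(P)^{-1}=A_J(g_P)$; regularity of $\omega_J$ then follows from the smoothness of the generating functional of the Wightman field $j$.

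The main obstacle is condition~\eqref{a3}, and this is where I would spend most of the effort. The difficulty is that for $g=\delta d h$ the pre-image $h$ can have much larger support than $g$: as in part (ii) of Lemma~\ref{l5.1}, only $\mathrm{supp}\,dh$ is controlled, lying in $(\mathrm{supp}\,g+V_+)\cap(\mathrm{supp}\,g+V_-)$. Hence spacelike separation of $g_1,g_2$ does not pass to their pre-images $h_1,h_2$, and $[j(h_1),j(h_2)]$ cannot be handled by locality alone. My strategy would be twofold: first establish the technical lemma that $[j(h_1),j(h_2)]=0$ whenever $\mathrm{supp}\,dh_1\perp\mathrm{supp}\,h_2$, by gauging $h_1$ (modulo a current-killing gradient) into a representative supported in a suitable region homeomorphic to $S^1\times B^3$ and exploiting the vanishing of $H_c^2$ there; second, to obtain centrality rather than mere commutation, translate the pre-images, writing $h_m=h_{m\,y}+k_{m\,,y}$ with the correction terms $k_{m\,,y}$ built from $dh_m$ so that $\mathrm{supp}\,dh_{m\,y}$ remains in a slightly enlarged but still spacelike-separated region. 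The commutator then splits into four terms, three vanishing by the technical lemma and one by locality, which gives translation invariance of $[j(h_1),j(h_2)]$; iterating over all $y\in\RR^4$ and invoking locality places the commutator in the centre, establishing \eqref{a3} and completing the identification of $\omega_J$ as the desired regular vacuum state.
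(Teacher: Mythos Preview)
Your proposal is correct and follows essentially the same route as the paper: the construction of $A_J$ via the $\smile$-splitting, the verification of \eqref{a1} and \eqref{a2} through Lemma~\ref{l5.1}, the covariance argument via $g_{\smile P}=g_{P\smile}$, and above all the two-step attack on \eqref{a3} (the $H_c^2(S^1\times B^3)=\{0\}$ lemma followed by the translation trick with the $k_{m,y}$) are exactly what the paper does. One small slip in your accounting: in the four-term expansion of $[j(h_1),j(h_2)]$ only \emph{three} terms vanish (two by the technical lemma, one by locality of $j$ on the $k$-terms), and the surviving fourth term is the translated commutator $[j(h_{1y}),j(h_{2y})]$---that is precisely what yields translation invariance rather than outright vanishing.
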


\vspace*{1mm} 
Let us finally determine the topological charge of
the potential $A_J$ associated with linked spacelike separated 
loop-shaped regions. So let $g \in \Cc_1(\RR^4)$ have support
in some loop-shaped region $\Loop$ which can continuously be retracted to 
some simple spacelike loop $\gamma$. This is then also possible 
for 
\mbox{$\widetilde{\Loop} \doteq \{ \Loop + V_+ \} \bigcap \{ \Loop + V_- \} 
\supset \Loop$}. So the complement 
$\RR^4 \backslash \widetilde{\Loop}$ is homotopic to 
$\RR^4 \backslash \gamma$ and one has for the corresponding 
homology groups
$ H_1(\RR^4 \backslash \widetilde{\Loop}) \approx H_1(\RR^4 \backslash \gamma)$.
By Alexander duality one obtains 
$H_1(\RR^4 \backslash \gamma) \approx H^2(\gamma) \approx H^2(S^1) 
= \{ 0 \}$, cf. \cite[Ch.~VI, Cor.~8.6]{Br}. 
So we conclude that for 
the $\smile$ component $g_\smile$ of $g$, which likewise has support
in~$\widetilde{\Loop}$, there exists according to part (ii) of
Lemma \ref{l5.1} a pre-image $h_\smile$ which has support in any 
given neighbourhood of $\widetilde{\Loop}$. Thus if 
$g_1, g_2 \in \Cc_1(\RR^4)$ have their supports in spacelike
separated loop-shaped regions $\Loop_1$ and $\Loop_2$, respectively,
one obtains 
$$
\, [A_J(g_1), A_J(g_2)] = [A_J(g_{1 \smile}), A_J(g_{2 \smile})]
= [j(h_{1 \smile}), j(h_{2 \smile})] = 0 \, .
$$
The latter equality 
follows from the fact that $h_1$, $h_2$ have their supports in 
neighbourhoods of the spacelike separated regions
$\widetilde{\Loop}_1$ and $\widetilde{\Loop}_2$, respectively, 
and the locality of the current. Thus the topological charges
related to linked loop-shaped regions turn out to be trivial for the 
vector potentials $A_J$. 

One can, however, combine the results of the preceding section 
with the present ones in order to exhibit examples of 
regular vacuum representations of the universal algebra $\fV$
with, both, non-trivial topological charges and electric currents.
The construction is similar to the definition of s-products in the
Wightman framework of quantum field theory \cite{Bo}. For the 
convenience of the reader, we briefly recall here this procedure.
Let $(\pi_T, \Hil_T, \Omega_T)$ be 
a regular vacuum representation of $\fV$
with non-trivial topological charge 
but trivial electric current and
let $(\pi_J, \Hil_J, \Omega_J)$ be a regular vacuum representation 
of $\fV$ with non-trivial electric current
but trivial topological charge. One then constructs a representation 
$\pi_{T J}$ on $\Hil_T \otimes \Hil_J$, putting 
$\pi_{T J}(V(a,g)) \doteq \pi_{T}(V(a,g)) \otimes \pi_{J}(V(a,g))$
for $a \in \RR$, $g \in \Cc_1(\RR^4)$.
Restricting the algebra $\pi_{T J}(\fV)$ to the subspace 
$\Hil_{T J} \doteq \pi_{T J}(\fV) \, \Omega_{T J} \subset \Hil_T \otimes \Hil_J$,
where $\Omega_{T J} \doteq \Omega_T \otimes \Omega_J$, 
one obtains a 
regular vacuum representation $(\pi_{T J}, \Hil_{T J}, \Omega_{T J})$ 
of $\fV$ with generating function 
$$
\omega_{T J}(V(a,g))
\doteq \langle \Omega_T, \pi_T(V(a,g)) \Omega_T \rangle \,
\langle \Omega_J, \pi_J(V(a,g)) \Omega_J \rangle \, ,
\quad a \in \RR,  \ g \in \Cc(\RR^4) \, .
$$
We omit the simple proof that this representation has the
desired properties.

\section{Topological charges of multiplets of electromagnetic 
fields}

\setcounter{equation}{0}

Up to this point we have considered the case of 
a single species of the electromagnetic field, described by the universal 
algebra $\fV$. One may expect that 
multiplets of such fields appear in the short distance (scaling)
limit of non-abelian gauge theories, where the gauge fields become 
non-interacting and transform as tensors under the adjoint 
action of some global gauge group. It seems therefore worthwhile 
to discuss the possible appearance of topological charges for these limit 
theories.   

In order to keep the framework simple, we treat here only the case of 
two electromagnetic fields with corresponding  intrinsic vector potentials.
The description of these potentials 
in terms of a universal algebra is straight forward: 
instead of proceeding from the space $\Cc_1(\RR^4)$ of test functions,
we consider now the algebraic sum 
$\Cc_1(\RR^4) \oplus \Cc_1(\RR^4)$. The corresponding universal 
algebra is determined by the relations \eqref{a1}, \eqref{a2} and \eqref{a3}, 
where one now takes \mbox{$g \in \Cc_1(\RR^4) \oplus \Cc_1(\RR^4)$} 
and where the support properties are determined by the eight 
components of these functions. It follows from the arguments given 
in \cite{BuCiRuVa} that the *-algebra generated by the unitaries
$V_2(a,g)$, where $a \in \RR$, $g \in \Cc_1(\RR^4) \oplus \Cc_1(\RR^4)$,
has a C*-norm induced by all of its GNS representations. 
The completion of this algebra with regard to this 
norm will be denoted by $\fV_2$; note that this algebra  
is not isomorphic to the completion of the algebraic tensor 
product $\fV \otimes \fV$ since 
the two types of fields need not commute with each other. 
The definitions of Poincar\'e 
transformations and of regular vacuum states 
carry over, \textit{mutatis mutandis}, to $\fV_2$ from 
those given for $\fV$. Thus they need not be detailed here. 

It is our aim to show that there exist regular vacuum representations
of $\fV_2$ with non-trivial topological charge.
More concretely, the group-theoretic commutator 
$\lfloor V_2(g_1), V_2(g_2) \rfloor$ attains 
in these representations values different from 
$1$ for test functions $g_1, g_2 \in \Cc_1(\RR^4) \oplus \Cc_1(\RR^4)$ 
with $\mbox{supp} \, g_1 \perp \mbox{supp} \, g_2$. Yet, in 
contrast to the results of Sec.\ \ref{no-go} for the algebra $\fV$, 
the underlying electromagnetic fields are now (linear)
Wightman fields. 

The example which we are going to construct is 
a generalised free field theory 
which is determined by its two-point function. In order
to simplify the notation, we introduce subspaces of 
$\Cc_1(\RR^4) \oplus \Cc_1(\RR^4)$, given by 
$\Cc_u(\RR^4) = \Cc_1(\RR^4) \oplus \{ 0 \}$, 
$\Cc_d(\RR^4) = \{ 0 \} \oplus \Cc_1(\RR^4)$ and we 
denote by $g_{u}$, $g_{d}$ their respective elements. 
Moreover,  similarly to Sec.\ \ref{go-go},  
we proceed from the components \ $g \in \Cc_1(\RR^4)$ 
to their (classes of) co-primitives $G \in \Dc_2(\RR^4)$, 
\viz $g = \delta G$. Making use of this notation, we define
for fixed $-1 \leq \zeta \leq 1$ a sesquilinear form on the
complex vector space \
$\CC \, \Cc_1(\RR^4) \oplus \Cc_1(\RR^4)$. Given 
$g_m = (g_{m \, u} + g_{m \, d})$ with co-primitive 
$G_m = (G_{m \, u} + G_{m \, d})$, $m = 1,2$, we put
$$
\langle g_1, g_2 \rangle_\zeta \doteq
\langle G_{1 u}, G_{2 u} \rangle_0
+ \langle G_{1 d}, G_{2 d} \rangle_0 
+ \zeta \langle G_{1 u}, \star G_{2 d} \rangle_0 
- \zeta \langle G_{1 d}, \star G_{2 u} \rangle_0 \, .
$$
Here 
$$
\, \langle G_1 , G_2 \rangle_0 \doteq (2 \pi)^{-3} \int \! dp \, 
\theta(p_0) \, \delta(p^2) \, 
\overline{(p \, \widehat{G_1} (p))} (p \, \widehat{G_2}(p))
$$
is the familiar scalar product on the one-particle space of the
free Maxwell field for $G_1, G_2 \in \CC \, \Dc_2(\RR^4)$.  Since 
$\, \langle \star G_1 , \star G_2 \rangle_0 = 
\langle G_1 , G_2 \rangle_0 $ and
$\star \star \upharpoonright  \CC \, \Dc_2(\RR^4) = -1$ 
one has 
$\overline{\langle g_1, g_2 \rangle}_\zeta = \langle g_2, g_1 \rangle_\zeta$.
Moreover, 
$$
| \langle G_1, \star G_2 \rangle_0 |^2 
\leq \langle G_1, G_1 \rangle_0 \, \langle \star G_2 , \star G_2 \rangle_0 
= \langle G_1, G_1 \rangle_0 \, \langle G_2 , G_2 \rangle_0 
$$ 
and consequently
$$
\langle g, g \rangle_\zeta \geq 
\langle G_{u}, G_{u} \rangle_0 + \langle G_{d}, G_{d} \rangle_0
- 2 | \zeta | \sqrt{\langle G_{u}, G_{u} \rangle_0 \langle G_{d}, G_{d} \rangle_0}
\, \geq 0 \, .
$$
Hence $\langle \, \cdot \, , \, \cdot \, \rangle_\zeta$
defines a positive (semidefinite) scalar product on \
$\CC \, \Cc_1(\RR^4) \oplus \Cc_1(\RR^4)$ for the given range
of $\zeta$. Since it is invariant under Poincar\'e transformations
and the Fourier transforms of 
$x \mapsto \langle g_1 , g_{2 \, x} \rangle_\zeta$
have support in $V_+$, it can be taken as  
two-point function of (a pair of) vector potentials $A_\zeta$.
Moreover, one can define corresponding regular quasi-free vacuum states 
$\omega_{ \, \zeta}$ on $\fV_2$ with generating function 
$$
\omega_{ \, \zeta}(V_2(a,g)) \doteq e^{- a^2 \, \langle g, 
g \rangle_\zeta \, / 2 } \, ,
\quad a \in \RR \, , \ g \in \Cc_1(\RR^4) \oplus \Cc_1(\RR^4) \, .
$$
Proceeding to the GNS representations $(\pi_\zeta, \Hil_\zeta, \Omega_\zeta)$,
one obtains vector potentials $A_\zeta(g)$ which depend linearly on 
$g \in \Cc_1(\RR^4) \oplus \Cc_1(\RR^4)$.  
Furthermore, the resulting electromagnetic fields satisfy all 
Wightman axioms \cite{StWi}. The familiar arguments establishing
these facts are omitted here. 

Let us turn now to the determination of the topological charges carried
by $A_\zeta$. They are fixed by the commutator, 
$g_1, g_2 \in  \Cc_1(\RR^4) \oplus \Cc_1(\RR^4)$, 
\begin{align*}
& [A_\zeta(g_1), A_\zeta(g_2)] = (\langle g_1, g_2 \rangle_\zeta  -
\langle g_2, g_1 \rangle_\zeta ) \, 1 & \\
& = \big( \Delta(G_{1 u} , G_{2 u}) +  \Delta(G_{1 d} , G_{2 d})
+ \zeta \Delta(G_{1 u} , \star G_{2 d}) 
- \zeta \Delta(G_{1 d} , \star G_{2 u}) \big) \, 1  
\, . &
\end{align*}
Here $\Delta$ is the distribution given in relation 
\eqref{pauli-jordan}. Whence these commutators have a similar structure
as the ones given in 
equation \eqref{commutator} for the topological vector potential $A_T$. 
In  particular,  if $\zeta \neq 0$, they contain 
terms of the form $\Delta(G, \star G^\prime )$ which are different from  
zero for suitable functions 
$ g = \delta \, G, g^\prime = \delta \, G^\prime \in \Cc_1(\RR^4)$ with
$\mbox{supp} \, g \perp \mbox{supp} \, g^\prime $, \ cf.~Sec.~\ref{go-go}. 
Thus, choosing in the present case a pair $g_1 \in \Cc_u(\RR^4)$,
$g_2 \in \Cc_d(\RR^4)$ of this particular type, it follows that 
the commutators are different from zero, indicating
a non-trivial topological charge. Hence the present model
provides examples of such theories with fully linear 
vector potentials. We note that these theories have, for any 
$-1 \leq \zeta \leq 1$, a global internal
symmetry group SO(2) whose action \ $R(\theta)$, $\theta \in [0, 2 \pi]$, 
on \ $g \in \Cc_1(\RR^4) \oplus \Cc_1(\RR^4)$ is given by 
$$
(R(\theta)g)_u = \cos(\theta) g_u + \sin(\theta) g_d \, , \quad
(R(\theta)g)_d = -sin(\theta) g_u + \cos(\theta) g_d \, , 
$$
in an obvious notation.

In a similar manner one can exhibit the existence of representations 
with non-trivial topological charge for higher dimensional 
multiplets of electromagnetic 
Wightman fields. The details are straightforward 
generalisations of the present example.
Of particular interest is the question in which manner 
such representations could 
manifest themselves already at finite scales in asymptotically free
non-abelian gauge theories. To answer 
this question would go beyond the 
scope of the present investigation, however. 

\section{Conclusions}
\setcounter{equation}{0}

Within the framework of the universal C*-algebra of the electromagnetic 
field, we have clarified the conditions for the existence of
theories with topological charges.  
As we have seen, the intrinsic (gauge invariant) 
vector potential cannot depend
linearly on test functions if these charges are non-trivial. On the other
hand, there exist many examples of regular vacuum representations of 
the algebra, carrying topological charges, where the vector potential still 
satisfies a condition of spacelike linearity. Thus,  there are no 
fundamental obstructions to the appearance of such charges.
 
It is an intriguing question how the existence of 
these charges would manifest itself experimentally. 
Roughly speaking, such non-trivial charges would indicate
that certain linked loop-shaped configurations of the 
(intrinsic) vector potential, 
which at first sight ought to be commensurable because of their spacelike 
separation, do not comply with this condition.  
One then has non-trivial commutation relations of canonical 
type for these potentials. Hence one might expect
in experiments that coherent photons, traversing a loop in 
the complement of another electromagnetic loop, would exhibit 
interference patterns, akin to the Aharonov-Bohm effect.

It is known that stable loop-shaped configurations of the (classical)
electromagnetic field exist \cite{BoIr}. Thus, in principle,  
such experiments seem feasible. In fact, topologically non-trivial 
configurations of quantum matter have recently been observed 
\cite{GhHaMoRaRuTi}. If one could extend these experiments to the 
quantised electromagnetic field, the question of the existence of 
non-trivial topological charges would become amenable
to an answer. 

We conclude this article by addressing a seeming theoretical puzzle
related to our results. 
As has been shown in \cite{FrHe,Bost}, 
one can construct from any Haag-Kastler theory 
under quite general conditions certain Wightman fields.
This is accomplished by proceeding from the spatially 
extended observables to point-like limits, integrating 
them along the way with test functions. 
As a matter of fact, the present examples nicely illustrate this fact.  
Applying these methods to the algebra generated 
by the topological vector 
potential~$A_T$, constructed in Sec.~\ref{go-go}, one would recover the
free Maxwell field; similarly, one would recover from the algebra
generated by the potential $A_J$, constructed in Sec.~\ref{current},   
the underlying current. Thus it may seem that the usage 
of non-linear fields is not really necessary. 

There are, however, two reasons why such a conclusion would be premature. 
First, in the framework of the universal algebra, the 
interpretation of the generating unitaries 
as exponentials of the intrinsic vector potential has been fixed
from the outset. These operators ought to be identified with corresponding 
hard ware measuring this field. Thus the transition from 
these operators to related 
Wightman fields would in general amount to a quite different physical 
interpretation of the theory. Second, point-like Wightman fields 
frequently hide the topological aspects of the theory which they 
generate. As a matter of fact, it took almost 50 years 
after the invention of the quantised electromagnetic field 
until Roberts exhibited in \cite{Roberts} certain of its topological
features. 
In contrast, these topological 
properties are already encoded in the defining relations 
of the universal algebra,  
in particular in relation \eqref{a3}. These relations   
were established in \cite{BuCiRuVa} by general, model-independent
considerations. We therefore think that the universal algebra of the
electromagnetic field and its possible generalisation to multiplets of
such fields provides an adequate framework for further study of
topological aspects of gauge theories.   

\vspace*{-1mm}
\section*{Acknowledgement}

\vspace*{-2mm}
DB  gratefully acknowledges the hospitality and financial support 
extended to him by the University of Rome ``Tor Vergata'' which made
this collaboration possible. 
FC and GR are supported by the ERC Advanced Grant 
669240 QUEST ``Quantum Algebraic Structures and Models". 
EV is supported in part by OPAL ``Consolidate the Foundations''.


\begin{thebibliography}{22}
{\small 

\bibitem{ArHaSch} Araki, H., Haag, R.\ and Schroer, B.,
``The determination of a local or almost local field from a 
given current'',
Nuovo Cimento \textbf{19} (1961) 90-102

\bibitem{Bo}
Borchers, H.J., ``Algebras of unbounded operators in quantum field theory'',
Physica A \textbf{124} (1984) 127-144

\bibitem{Bost}
Bostelmann, H., ``Phase space properties and the short distance structure in 
quantum field theory'', J. Math. Phys. \textbf{46} (2005) 052301 

\bibitem{Br}
Bredon, G.B.,
{\it Topology and Geometry}, Springer, New York, 1993

\bibitem{BrDuFr}
Brunetti, R., D\"utsch, M.\ and Fredenhagen, K., 
``Algebraic quantum field theory and the renormalization 
groups'', 
Advances in Theoretical and
Mathematical Physics \textbf{13} (2009) 1541-1599

\bibitem{BoTu} Bott, R.\ and Tu, L.W.,
{\it Differential Forms in Algebraic Topology}, 
Graduate Texts in Mathematics \textbf{82}, 
Springer, New York, 1982

\bibitem{BuCiRuVa} Buchholz, D., Ciolli, F., Ruzzi, G. and Vasselli, E., 
``The universal C*-algebra of the electromagnetic field'',
Lett.\ Math.\ Phys.\ {\bf 106} (2016) 269--285 \ 
Erratum: Lett.\ Math.\ Phys.\ {\bf 106} (2016) 287

\bibitem{BuMaPaTo} Buchholz, D., Mack, G., Paunov, R.R. and Todorov, I.T.,
``An algebraic approach to the classification of local conformal field 
theories'', pp. 299-305 in: 
IXth International Congress on Mathematical Physics. Swansea 1988,
Eds. I.M. Davies, B. Simon, A.~Truman, Adam Hilger, Bristol 1989

\bibitem{BoIr} Bouwmeester, D.\ and Irvine, W.T.M.,
``Linked and knotted beams of light'', 
Nature Physics {\bf 4} (2008) 716-720 

\bibitem{FrHe} Fredenhagen, K.\ and Hertel, J., 
``Local algebras of observables and pointlike localized fields'',
Commun.\ Math.\ Phys.\ {\bf 80} (1981) 555-561

\bibitem{GhHaMoRaRuTi} Gheorghe, A.H., Hall, D.S., M\"ott\"onen, Ray, M.W., 
Ruokowski, E.\ and Tiurev, K.,
``Tying quantum knots'', 
Nature Physics {\bf 12} (2016) 478-483

\bibitem{GrHaVa} Greub, W., Halperin, S.\ and 
Vanstone, R., 
{\it Connections, curvature, and cohomology.~V.1}, 
Academic Press 1972

\bibitem{HaKa} Haag, R.\ and Kastler, D., 
``An algebraic approach to quantum field theory'', 
J.\ Math.\ Phys.\ {\bf 5} (1964) 848--861

\bibitem{Hat} Hatcher, A.,  
{\it Algebraic Topology}, 
Cambridge University Press, Cambridge England, 2002 

\bibitem{Roberts} Roberts, J.E., 
``A survey of local cohomology'', \ 
In: Mathematical problems in theoretical physics. (Rome, 1977), 81--93, 
Lecture Notes in Phys. 80, Springer, Berlin, New York, 1978

\bibitem{StWi}  Streater, R.F.\ and Wightman, A.S., 
{\it PCT, Spin and Statistics, and All That},
W.A. Benjamin, New York, Amsterdam, 1964

\bibitem{Steinmann} Steinmann, O., 
{\it Perturbative Quantum Electrodynamics and Axiomatic Field Theory},
Springer, Berlin, Heidelberg, 2000

\bibitem{Strocchi2} Strocchi, F.,
{\it An Introduction to Non--Perturbative Foundations of Quantum 
Field Theory},
Oxford University Press, 2013

}
\end{thebibliography}
\end{document}